\documentclass[journal]{IEEEtran}
\usepackage{hyperref}
\usepackage{graphicx}
\usepackage{subfigure}
\usepackage{enumerate}
\usepackage{amsmath}
\usepackage{color}
\usepackage{amsmath}
\usepackage{algorithm}
\usepackage[noend]{algpseudocode}
\usepackage{multirow}
\usepackage{amsmath,amssymb,amsthm,mathrsfs,amsfonts,dsfont}

\hyphenation{since}


\newcommand{\bp}{\begin{proof} \small }
\newcommand{\ep}{\end{proof} \normalsize}
\newcommand{\epx}{\end{proof} \small}
\newcommand{\bpa}{\begin{proofappx} \footnotesize }
\newcommand{\epa}{\end{proofappx} \small }

\newtheorem{proposition}{Proposition}
\newtheorem{corollary}{Corollary}
\newtheorem{lemma}{Lemma}
\newtheorem{assumption}{Assumption}

\newtheorem*{theorem*}{Theorem}
\newtheorem*{proposition*}{Proposition}
\newtheorem*{corollary*}{Corollary}
\newtheorem*{lemma*}{Lemma}
\newtheorem*{assumption*}{Assumption}
\newtheorem*{definition*}{Definition}
\newtheorem*{claim*}{Claim}

\newcommand{\be}{\begin{equation}}
\newcommand{\ee}{\end{equation}}
\newcommand{\bs}{\begin{subequations}}
\newcommand{\es}{\end{subequations}}
\newcommand{\bq}{\begin{eqnarray}}
\newcommand{\eq}{\end{eqnarray}}
\newcommand{\bqn}{\begin{eqnarray*}}
\newcommand{\eqn}{\end{eqnarray*}}

\newcommand{\ba}{\left[ \begin{array}}
\newcommand{\ea}{\\ \end{array} \right]}
\newcommand{\ben}{\begin{enumerate}}
\newcommand{\een}{\end{enumerate}}

\def\real{{\mathchoice%
{\hbox{\rm\setbox1=\hbox{I}\copy1\kern-.45\wd1 R}}
{\hbox{\rm\setbox1=\hbox{I}\copy1\kern-.45\wd1 R}}
{\hbox{\scriptsize\rm\setbox1=\hbox{I}\copy1\kern-.45\wd1 R}}
{\hbox{\scriptsize\rm\setbox1=\hbox{I}\copy1\kern-.45\wd1 R}}}}

\def\Zint{{\mathchoice{\setbox1=\hbox{\sf Z}\copy1\kern-.75\wd1\box1}
{\setbox1=\hbox{\sf Z}\copy1\kern-.75\wd1\box1}
{\setbox1=\hbox{\scriptsize\sf Z}\copy1\kern-.75\wd1\box1}
{\setbox1=\hbox{\scriptsize\sf Z}\copy1\kern-.75\wd1\box1}}}
\newcommand{\complex}{ \hbox{\rm C\kern-0.45em\rule[.07em]{.02em}{.58em}%
\kern 0.43em}}

\begin{document}
%
\title{Personalized Course Sequence Recommendations}
%
%
%

\author{Jie~Xu,~\IEEEmembership{Member,~IEEE,}
        Tianwei~Xing,~\IEEEmembership{Student Member,~IEEE,}
        and~Mihaela~van~der~Schaar,~\IEEEmembership{Fellow,~IEEE}
\thanks{J. Xu is with the Department
of Electrical and Computer Engineering, University of Miami, Coral Gables,
FL 33146, USA. Email: jiexu@miami.edu. }
\thanks{T. Xing and M. van der Schaar are with the Department of Electrical Engineering, University of California, Los Angeles, CA 90095, USA. Emails: twxing@ucla.edu, mihaela@ee.ucla.edu. }}

\maketitle

\begin{abstract}
Given the variability in student learning it is becoming increasingly important to tailor courses as well as course sequences to student needs. This paper presents a systematic methodology for offering personalized course sequence recommendations to students. First, a forward-search backward-induction algorithm is developed that can optimally select course sequences to decrease the time required for a student to graduate. The algorithm accounts for prerequisite requirements (typically present in higher level education) and course availability. Second, using the tools of multi-armed bandits, an algorithm is developed that can optimally recommend a course sequence that both reduces the time to graduate while also increasing the overall GPA of the student. The algorithm dynamically learns how students with different contextual backgrounds perform for given course sequences and then recommends an optimal course sequence for new students. Using real-world student data from the UCLA Mechanical and Aerospace Engineering department, we illustrate how the proposed algorithms outperform other methods that do not include student contextual information when making course sequence recommendations.
\end{abstract}


%
\IEEEpeerreviewmaketitle

\section{Introduction}
Recent studies \cite{cca2014fouryear}\cite{astin2005degree} find that the vast majority of college students in the United States do not complete college in four years and that fewer college students are today graduating on time than a decade ago. Taking longer to graduate is not cheap - it costs \$15,933 more in tuition, fees and living expenses for every extra year at a public two-year college and \$22,826 for every added year at a public four-year college \cite{cca2014fouryear}. While many factors contribute to students taking longer to graduate, such as credits lost in transfer, uninformed choices due to the low advisor-student ratios and poor preparation for college, the inability of students to take required courses when needed is among the leading causes \cite{cca2014fouryear}. If courses are elected and taken myopically, without a clear plan, a student may end up in an awkward situation in which required subsequent courses are offered (much) later, thereby (significantly) prolonging graduation time. To reduce the time-to-graduation, it is therefore of paramount importance for the student to elect courses in a foresighted way by taking into account the possible subsequent course sequences (including which courses are mandatory and which ones are not, and the course prerequisites) and when the various courses are offered. More importantly, because the number and variety (in backgrounds, in knowledge, in goals) of students is expanding rapidly, it is more and more important to tailor course sequences to students since the same learning path is unlikely to best serve all students. Therefore, it is necessary to develop an automated course sequence recommendation system that learns from the performance of previous students in various courses/sequences and uses what it has learned to adaptively recommend  course sequences that are personalized for the current student, depending on the student's background and his/her completion status of the program in order to maximize any of a variety of objectives including time to graduation, grades and the trade-off between the two. Issuing personalized course sequences recommendations for students poses numerous challenges, some of which are unique to this type of recommendation system:
\begin{itemize}
\item \textbf{Sequences} Unlike most existing recommendation systems (such as those used to recommend movies or products to purchase), course sequence recommendations requires issuing sequences of courses (items) rather than a single item at a time. Hence, such course sequence recommendations require dealing with a large decision space which grows combinatorially with the number of courses. Searching for the best sequence to make personalized recommendations to a student represents a challenging problem even when the number of courses is moderate. When the number of offered courses is large (as it is typical in academic programs at the undergraduate or graduate level), offering personalized recommendations becomes a very challenge problem.
\item \textbf{Flexibility and Constraints} There is a great deal of flexibility in course sequence recommendation since multiple courses can be taken simultaneously. At the same time, course sequence recommendation is also subject to many constraints - some courses are mandatory while some are not, and some courses are prerequisite of others. Both the flexibility and the constraints make course sequence recommendation an extremely complex problem.
\item \textbf{Evolving Recommendation} Any static course sequence is sub-optimal since the knowledge, experience and performance of a student develops and evolves in the process of learning. Students may arrive at different completion status of the academic program at a given time depending on their performance in the finished courses and hence, they should be given different recommendations of subsequent courses.
\item \textbf{Personalization} An even more difficult challenge is that students vary tremendously in backgrounds, knowledge and goals. As a result, the same recommendation policy is unlikely to best serve all students and provide the best learning experience for every student. However, personalizing course sequence recommendation seems a daunting task since students attend college only once and hence, it is unrealistic to try different course sequences on a student and learn the best course sequence for this student.
\end{itemize}
\begin{figure}
  \centering
  \includegraphics[scale = 0.6]{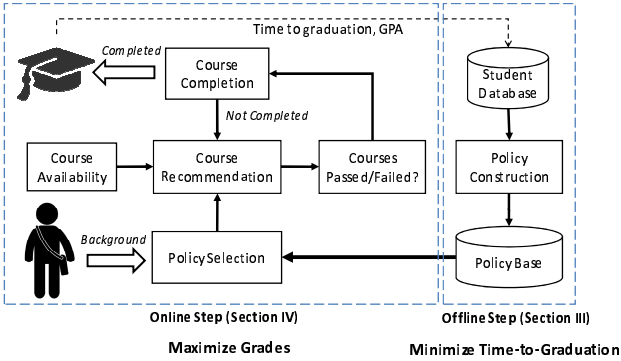}\\
  \caption{System Diagram for Course Sequence Recommendation}\label{System}
\end{figure}

In this paper, we develop an automated course sequence recommendation system that is able to provide personalized and adaptive recommendation to students depending on their background as well as their evolving performance in the program. In order to reduce complexity and enable tractable solutions, we solve this problem in two steps. The first step (Section III) involves offline learning, in which a set of candidate recommendation policies are determined to minimize the expected time to graduation or maximize the on-time graduation probability using an existing dataset of anonymized student records. A dynamic programming based approach is adopted to solve the adaptive sequence recommendation which recommends subsequent course sequences to students depending on their completion status of the academic program by taking into account the prerequisite relationship among courses and the course availability across academic terms (quarters/semesters). The second step (Section IV) involves online learning, in which for each new student, a suitable course sequence recommendation policy is selected depending on this student's background using the learned knowledge from the previous students. Online contextual multi-armed bandit techniques are used to develop policy selection algorithms to maximize the students' grades given the time-to-completion constraints. To enable fast learning, the algorithm exploits the similarity among students and adaptively clusters students and refines the clustering as more students enter and finish the program. See Figure \ref{System} for a depiction of the system.

The main contributions of this paper are as follows:
\begin{itemize}
\item We formulate the personalized and adaptive course sequence recommendation problem and develop systematic solutions aimed at reducing time to graduation and maximizing students' grades upon graduation.
\item We provide analytical characterizations on the impact of course prerequisite dependency and the course availability on the possible emerging subsequent course sequences as well the structure of the optimal course sequence recommendation policy in certain specific scenarios.
\item We rigorously analyze the performance of the online personalized policy selection algorithm and prove that the proposed algorithm converges fast and is able to select the optimal personalized course sequence recommendation policy for students.
\item Extensive simulations are carried out on a real-world student record dataset to verify the efficacy of the proposed system. They also reveal how the time-to-graduation is affected by the course prerequisite dependency as well as the course availability across academic terms, thereby providing important insights and guidelines on how to plan the curriculum and allocate teaching resources to improve the on-time graduation ratio.
\end{itemize}

The rest of this paper is organized as follows. In Section II, we review the literature and highlight our contribution. In Section III, we study the offline learning problem to determine a set of candidate course sequence recommendation policies based on dynamic programming. In Section IV, we study the online learning problem to select personalized policies for students based on online contextual multi-armed bandits techniques. Section V provides simulation results. In Section VI, we conclude the paper.

\section{Related Work}
Machine learning for education has recently gained much attention \cite{baraniuk2015open}\cite{Cen2015Big}. Previous research focuses on grade prediction \cite{meier2015predicting}, drop-out prediction \cite{KDDCUP}, personalized teaching styles and materials \cite{tekin2015etutor}, estimating learners' knowledge of concepts underlying a domain \cite{lan2014sparse}, multimedia and cooperative learning \cite{asif2004multimedia} etc. This paper studies the important, yet much less investigated problem of (personalized) course sequence recommendation. Solving this problem has the potentially significant impact of shortening the time that students need to graduate. Methods solving this problem can then be combined with other methods to provide a comprehensive set of tools for personalizing education.

There is much work on recommending relevant courses/learning materials to students according to students' types (e.g. interests, knowledge levels, learning styles and feedback) \cite{klavsnja2011learning} \cite{chen2005personalized} \cite{wen2008effective} \cite{farzan2006social} \cite{bendakir2006using} \cite{ghauth2010learning} \cite{shishehchi2011review}. Besides course recommendation, there is extensive work on recommender systems for assisting users with finding desirable products or services \cite{resnick1997recommender} \cite{balabanovic1997fab} \cite{ricci2011introduction}. However, several unique features of course sequence recommendation make these approaches unsuitable for the considered problem. First, while traditional recommendation systems deal with the problem of recommending items or sets of items, most of them do not take into account prerequisites while recommending an item: a course can be taken only when all its prerequisite courses have been taken and passed. Thus, it does not make sense to recommend to a student a course if the prerequisite courses have not been completed. Secondly, there are complex constraints on the recommendation: the courses taken by a student must satisfy requirements (e.g. take 10 mandatory courses and 5 out of 12 elective courses) in order for the student to graduate. Thirdly, courses are not available in all quarters due to teaching resource constraints. If the next available quarter of a required course is far away from the current quarter, then it might be wiser to take this course earlier rather than later.

Recommendation with prerequisites was studied in \cite{parameswaran2010evaluating}, in which the goal is to recommend the best set of $k$ items when there is an inherent ordering between items. Various prerequisite structures were studied and the complexity of determining the best set is proven to be NP-Hard. Several heuristic approximation algorithms were developed to solve the recommendation problem. However, the problem is formulated as a set recommendation problem rather than a sequential recommendation problem, which ignores the course prerequites, the evolving knowledge of students (and grades so far) as well as the course availability in different quarters.  Recommendation with complex constraints was studied in \cite{parameswaran2011recommendation} where increasingly expressive models were developed to check if the requirements are satisfied and course recommendations were made by taking into account these requirements. However, the course prerequisites and the course availability are not considered. A Markov Decision Process based recommender system was developed in \cite{shani2002mdp} to take into the long-term effects of each recommendation. However, this approach is not able to handle the course prerequisites or the course requirement constraints.

Our algorithm for online personalized recommendation policy selection builds on the contextual multi-armed bandits methods \cite{slivkins2014contextual} \cite{dudik2011efficient} \cite{langford2008epoch} \cite{chu2011contextual}. Most of the prior work on contextual bandits is focused on an agent making single-stage decisions based on the provided context information. In contrast, in this paper, arms are the course sequence recommendation policies which are selected depending on the student's background but the policy itself also induces a sequence of decision making depending on the evolving performance of the student.

\section{Course Sequence Recommendation: Policy Construction}

We consider a curriculum consisting of a set of courses $\mathcal{N} = \{1,2,...,N\}$. Among these courses, there are $M$ mandatory courses and $E = N-M$ elective courses. We consider a discrete time system where a student takes courses quarter by quarter \footnote{We use the quarter system for illustration but our approach also works for the semester system.} and can stay in the program for at most $T$ quarters. Quarters are indexed by $t =  1,2,...,T$. Let $s(t)$ be a course state vector of size $N$ which is used to indicate the courses that a student has already taken and passed by the end of quarter $t$. The first $M$ elements are with respect to the mandatory courses and the remaining $E$ elements are with respect to the elective courses. Each element of $s(t)$ takes a binary value where $s_n(t) = 1$ means that the student has taken and passed course $n$ and $s_n(t) = 0$ otherwise. Initially each student passes zero courses such that each element of $s(0)$ satisfies $s_n(0) = 0, \forall n$.

In each quarter, the maximum number of courses a student can take is $C$. Let $A(t)$ denote the elected courses in quarter $t$. However, even though the student is free to elect courses, $A(t)$ must come from a set of feasible courses $\mathcal{F}(t,s(t-1))$ depending on the index $t$ of the quarter as well as the course state $s(t-1)$ of the student. Firstly, courses that have already been taken and passed cannot be retaken. Secondly, the student can only take courses that are available in that quarter since a course is usually not offered in every quarter. Let $\Gamma(t)\subseteq \mathcal{N}$ denote the set of available courses offered in quarter $t$. Thirdly, the student can only take a course when he/she has finished all its prerequisite courses. We formalize course prerequisite in more detail as follows.

\textbf{Course Prerequisite} Courses have prerequisite dependencies, namely courses can be elected only when certain prerequisite courses have been taken and passed. In general, the prerequisite dependency can be described as a directed acyclic graph (DAG), denoted by $\mathcal{G} = \langle \mathcal{N}, \mathcal{E}\rangle$ where $\mathcal{N}$ is the set of courses and $\mathcal{E}$ is the set of directed edges. A directed edge $m\to n$ between two courses $m$ and $n$ means that course $m$ is a prerequisite of course $n$. Let $P(n) = \{m: m\to n \in \mathcal{E}\}$ be the set of prerequisite courses of $n$. Only when all courses in $P(n)$ have been taken can course $n$ be elected. Note that if $P(n)$ is an empty set, then course $n$ has no prerequisite courses and hence can be elected at any time (whenever available). Moreover, any elective course cannot be the prerequisite course of a mandatory course. Otherwise, the elective course effectively becomes mandatory. Nevertheless, an elective course can be the prerequisite of another elective course. Figure \ref{Prerequisite} illustrates part of the prerequisite graph for the undergraduate program in the Mechanical and Aerospace Engineering department at UCLA.

\begin{figure}
  \centering
  \includegraphics[scale = 0.45]{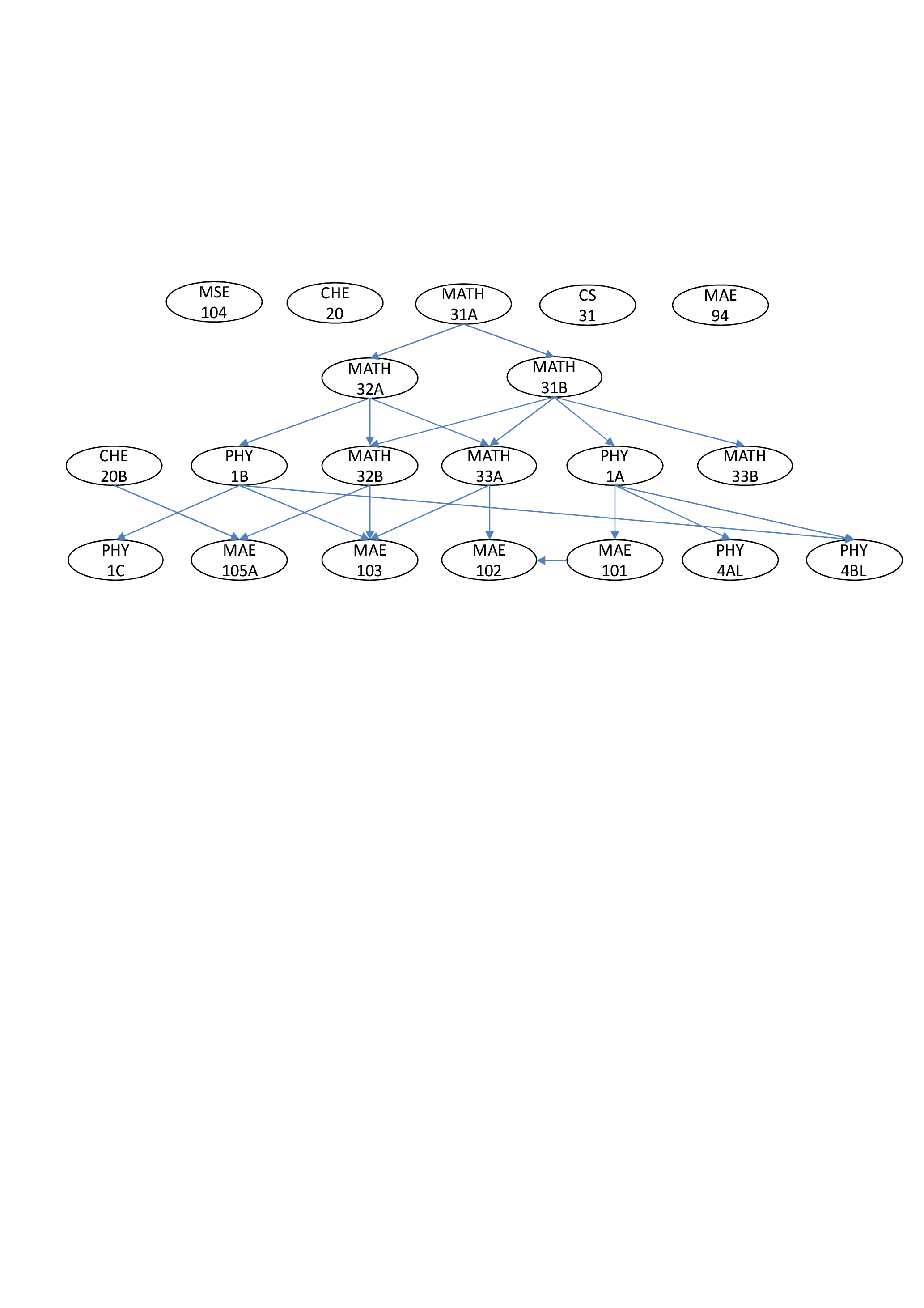}\\
  \caption{The prerequisite graph for the undergraduate program in the Mechanical and Aerospace Engineering department at UCLA. }\label{Prerequisite}
\end{figure}

Given these constraints, the feasible set of courses that a student can take in quarter $t$ given his/her course state $s(t-1)$ can be computed as follows:
\begin{align}
\mathcal{F}(t,s(t-1)) = \{n: &s_n(t-1) = 0; n \in \Gamma(t); \nonumber\\
&\forall m \in P(n), s_m(t-1) = 1\}
\end{align}
Since a student cannot take more than $C$ courses per quarter, the possible combinations of courses that can be elected by a student is
\begin{align}
\mathcal{A}(t, s(t-1)) = \{A: A \subseteq \mathcal{F}(t, s(t-1)); |A|\leq C\}
\end{align}

At the end of each quarter $t$, the student either passes or fails the course that he/she takes in this quarter. The probability that a student fails a course depends on the difficulty of the course as well as how many courses that he/she is taking simultaneously in the same quarter, which can be estimated from the student academic record dataset. Denote the probability that the student fails a course $n$ by $\epsilon_n(k)$ where $k$ is the number of simultaneous courses. Typically, $\epsilon_n(k)$ is a non-decreasing function in $k$ to capture the fact that the student's effort has to be distributed into multiple courses. Depending on the course performance outcome in this quarter, the course state will evolve from $s(t-1)$ to $s(t)$. If the student passes a course $n \in A(t)$, then $s_n(t) = 1$; otherwise, $s_n(t)$ remains 0.

A student graduates when he/she has taken and passed all mandatory courses and at least $E_0 \leq E$ elective courses before the end of $T$ quarters where $E_0$ is a predefined number by the program. The course states in which the student can graduate are called \textrm{terminal states}, which must satisfy $\forall n = 1,...,M, s_n = 1$ and $\sum\limits_{n=M+1}^N s_n \geq E_0$. Let $\hat{S}$ be the set of all terminal course states. There is a reward function $U: \hat{S}\times \{1,...,T\} \to \mathds{R}$ for each terminal state indicating the reward of reaching the terminal state by a specific quarter. For example, $U(\hat{s},t) = 1, \forall \hat{s}\in \hat{S},\forall t$ assigns equal value to all terminal states if the system only cares about whether the student can graduate on time. For another example, $U(\hat{s}, t) = T-t+1, \forall \hat{s}\in \hat{S}$ allows the system to take into account the exact time of graduation.

A course sequence recommendation policy specifies for each course state in any quarter, the next courses that should be taken. Let $\pi(s,t)$ denote the courses that are recommended to take in quarter $t$ given the course state $s$. Given a course sequence recommendation policy $\pi$, starting with any state $s$ in any quarter $t$, the course state $s$ evolves stochastically (since a student may pass or fail the course with probabilities), thereby inducing a probability distribution over the terminal state that can be reached. Let $V(s, t) = \sum\limits_{\hat{s},\tau\geq t} p^\pi_{\hat{s},\tau}(s,t) U(\hat{s}, \tau)$ denote the value of state $s$ in quarter $t$ when policy $\pi$ is adopted where $p^\pi_{\hat{s},\tau}(s,t)$ is the probability of reaching a terminal state $\hat{s}$ in quarter $\tau\geq t$ starting with state $s$ in quarter $t$. The objective of the system is to determine the optimal policy that maximizes the value of the initial state $s(0)$, i.e. $\pi^* = \arg\max_\pi V(s(0),1)$.

Our solution to find the optimal policy $\pi^*$ consists of two phases. In the first phase, we perform a forward search starting from quarter 1 through quarter $T$ to determine all possible course states that can emerge on the learning path. The purpose of this phase is to reduce the course state space in each quarter that the system should look at. In the second phase, we perform a backward induction starting from quarter $T$ through quarter $1$ to compute the optimal set of courses that should be taken in each possible course state. The purpose of this phase is to determine the course sequence recommendation that minimizes the graduation time (or ensure that students graduate before a desired time).

\subsection{Forward Search Phase}
Since the number of possible course states grows exponentially with the number of courses in the curriculum, the course state space can be huge for even a moderate number of courses. However, thanks to the course prerequisite constraint and the course availability constraint, the number of course states that can emerge in a particular quarter can be significantly limited. The purpose of the forward search is to determine the possible course states, thereby reducing the problem complexity.

Let $\mathcal{L}(t)$ denote the set of possible course states by the end of quarter $t$ and $\mathcal{H}(t)$ denote the set of state-course pairs in quarter $t$. Initially $\mathcal{L}(0) = \{s(0)\}$. In each quarter $t$, the algorithm examines each non-terminal course state $s(t-1)\in \mathcal{L}(t-1)$ and determines the feasible course set $\mathcal{F}(t, s(t-1))$  for this course state hence the possible combinations of course $\mathcal{A}(t, s(t-1))$ that can be elected in this quarter. For each combination of courses $A\in \mathcal{A}(t, s(t-1))$, the state-course pair $(s(t-1), A)$ is inserted into $\mathcal{H}(t)$. Then all possible new course states $s(t)$ with respect to $(s(t-1), A)$ is included in $\mathcal{L}(t)$. Moreover, the probability that $s(t-1)$ transits to $s(t)$ is computed by
\begin{align}\label{transition}
&p(s(t)|s(t-1), A)  \nonumber\\
=&\prod_{n: n\in A, s_n(t) =1}(1-\epsilon_n(|A|))\prod_{n:n\in A, s_n(t) = 0}\epsilon_n(|A|)
\end{align}
This algorithm is summarized in Algorithm 1 and is illustrated in Figure 2.

\begin{figure}
  \centering
  \includegraphics[scale = 0.4]{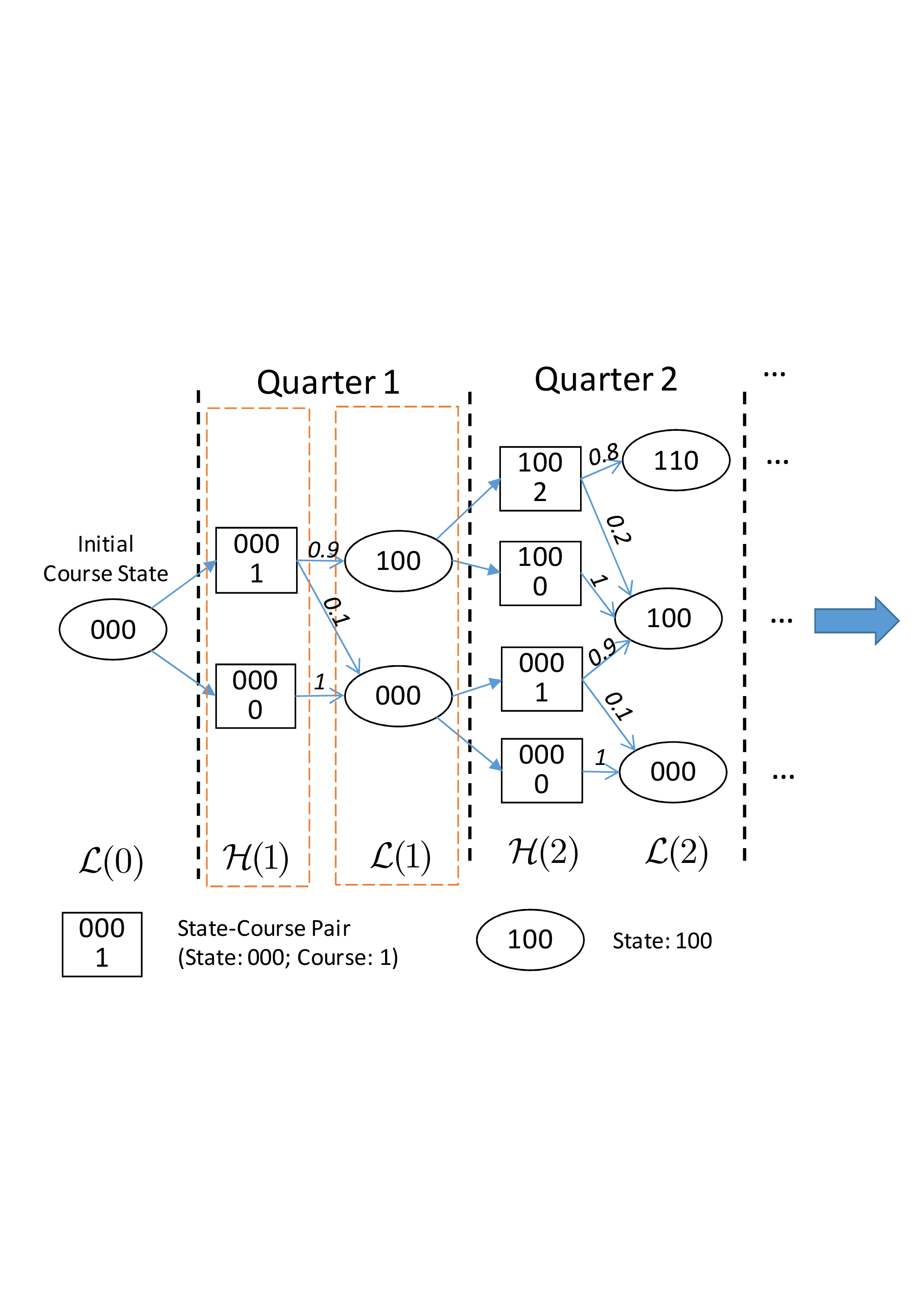}\\
  \caption{Illustration for Forward Search. Each course state (circle) represents the completion status of the three courses. For instance, 100 means that only the first course is taken and passed. Each state-course pair (rectangle) represents the next courses elected in a given state. For instance, 100/2 means that course 2 is elected as the next course to take in a state 100.}\label{ForwardSearch}
\end{figure}

\begin{algorithm}
\caption{Forward Search}
\begin{algorithmic}[1]
\State \textbf{Initialization}: $\mathcal{L}(t) = \emptyset, \mathcal{H}(t) = \emptyset,\forall t$.
\State Initial possible course states $\mathcal{L}(0) = s(0) = \{s_n = 0, \forall n\}$
\State \textbf{For} quarter $t=1$ \textbf{To} quarter $T$:
\State \indent \textbf{For} each course state $s\in \mathcal{L}(t-1)$
\State \indent\indent Determine feasible course set $\mathcal{F}(t, s)$ and feasible courses-to-take set $\mathcal{A}(t, s)$
\State \indent\indent \textbf{For} each feasible combinations of courses $A \in \mathcal{A}(t,s)$
\State \indent\indent\indent Update the current list $\mathcal{H}(t) \leftarrow \mathcal{H}(t)\cup (s, A)$
\State \indent\indent\indent Update $\mathcal{L}(t)$ by adding all possible states
\State \indent\indent \textbf{End For}
\State \indent \textbf{End For}
\State \textbf{End For}
\end{algorithmic}
\end{algorithm}

Next, we analyze the property of the set $\mathcal{L}(t)$ of possible course states.
\begin{lemma}
For any course prerequisite and course availability constraints, $\mathcal{L}(t)$ is weakly expanding, i.e. $\mathcal{L}(0)\subseteq \mathcal{L}(1) ... \subseteq \mathcal{L}(T)$.
\end{lemma}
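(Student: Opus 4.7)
The plan is to prove the inclusion $\mathcal{L}(t-1)\subseteq \mathcal{L}(t)$ by induction on $t$, using as the key structural observation that taking no courses in a quarter is always a feasible action. The base case is trivial since $\mathcal{L}(0)\subseteq\mathcal{L}(0)$.

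For the inductive step, I would fix any $s\in\mathcal{L}(t-1)$ and argue that $s\in\mathcal{L}(t)$ by exhibiting an admissible transition that fixes the state. Specifically, I would note that the empty action $A=\emptyset$ always satisfies $A\subseteq \mathcal{F}(t,s)$ and $|A|=0\le C$, so $\emptyset\in \mathcal{A}(t,s)$ regardless of the course prerequisite graph $\mathcal{G}$ or the availability sets $\Gamma(t)$. Since Algorithm 1 inserts $(s,A)$ into $\mathcal{H}(t)$ and all resulting next states into $\mathcal{L}(t)$ for every $A\in\mathcal{A}(t,s)$, it suffices to check what state results when $A=\emptyset$. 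Inspecting the transition in equation (3), both products are empty and thus equal to $1$, so $p(s\mid s,\emptyset)=1$, meaning the only reachable next state from $(s,\emptyset)$ is $s$ itself. Hence $s\in\mathcal{L}(t)$.

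The only subtlety is whether the forward-search loop actually processes $s$ at quarter $t$. The algorithm text expands each non-terminal state in $\mathcal{L}(t-1)$, so non-terminal states propagate through the argument above without any issue. For terminal states, I would observe that they already lie in $\mathcal{L}(t-1)$ by hypothesis, and adopt (or point out) the natural convention that once a terminal state is reached it is retained in $\mathcal{L}(\tau)$ for all $\tau\ge t-1$ (i.e., a graduated student's record does not disappear); this convention is consistent with the value-function formulation in the text, where $V(s,t)$ sums over all $\tau\ge t$. With this, the inductive step covers both cases and chaining gives $\mathcal{L}(0)\subseteq\mathcal{L}(1)\subseteq\cdots\subseteq\mathcal{L}(T)$.

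The main ``obstacle'' is really just conceptual rather than technical: one must recognize that the model permits a zero-course quarter as a valid action (an idle quarter), which is what guarantees monotonicity irrespective of the DAG $\mathcal{G}$ or the per-quarter offerings $\Gamma(t)$. Once this is articulated, the remainder is a one-line check using the empty-product convention in the transition formula.
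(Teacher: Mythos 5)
Your proof is correct and uses the same key idea as the paper's: the empty course set is always a feasible action, so any state in $\mathcal{L}(t-1)$ persists into $\mathcal{L}(t)$. The extra care you take with terminal states and the empty-product convention in the transition probability only makes explicit details the paper leaves implicit.
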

\begin{proof}
Consider any course state $s \in \mathcal{L}(t)$. Since the student can take no course in the subsequent quarter $t+1$ and hence, the course state remains the same. Therefore, $\mathcal{L}(t+1)$ must at least include $s$ and hence, $\mathcal{L}(t) \subseteq \mathcal{L}(t+1)$.
\end{proof}
Lemma 1 states an intuitive result that the possible course states grow over quarters. However, how fast this set grows depend on the specific course prerequisite dependency as well as the course availability. A formal characterization for a general DAG seems extremely complicated; in the proposition below, we determine the growth rate of $\mathcal{L}(t)$ for two specific cases.

\begin{proposition}
Suppose all courses are offered in all quarters and $C = 1$. (1) If the course prerequisite DAG is a line, then $|\mathcal{L}(t)| = t$. (2) If the course prerequisite DAG is an empty graph (i.e. no prerequisite dependency), then $|\mathcal{L}(t)| = \sum\limits_{\tau=0}^t {N \choose \tau}$. (3) For any general prerequisite DAG, $t \leq |\mathcal{L}(t)| \leq \sum\limits_{\tau=0}^t {N \choose \tau}$.
\end{proposition}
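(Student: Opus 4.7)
The plan is to handle the three parts separately, in each case giving a tight structural description of the set $\mathcal{L}(t)$ of reachable course states under the hypotheses $C=1$ and universal course availability, and then combining the two extremes to bracket the general case.

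For part (1), I would first observe that in a line DAG $1\to 2\to\cdots\to N$ the only courses ever feasible (given $C=1$) are prefixes: if $s_n=1$ then $s_m=1$ for every $m<n$. Hence every reachable state is determined by a single integer $k\in\{0,1,\dots,N\}$ counting how many initial courses have been passed. In quarter $t$, using $C=1$ we can increase $k$ by at most one, so $k\leq t$; conversely any $k\in\{0,1,\dots,t\}$ is reachable (take courses $1,\dots,k$ and pass them, and either idle or fail for the remaining quarters). Counting gives $|\mathcal{L}(t)|=t+1$ (the proposition's statement being understood up to the off-by-one from whether the initial state is counted among the "new" states; the structural claim of linear growth is the content).

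For part (2), with no prerequisites and $C=1$, the state $s(t)$ is the indicator of a subset of $\mathcal{N}$ whose cardinality is bounded by the number of quarters in which the student actually passed, hence by $t$. Conversely, for any subset $S\subseteq\mathcal{N}$ with $|S|=\tau\leq t$, a trajectory reaching $S$ is to enumerate $S$ in any order over the first $\tau$ quarters and pass each (possible because there are no prerequisites and every course is offered), then idle for the remaining $t-\tau$ quarters. Therefore $\mathcal{L}(t)$ is exactly $\{S\subseteq\mathcal{N}:|S|\leq t\}$, whose cardinality is $\sum_{\tau=0}^{t}\binom{N}{\tau}$.

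For part (3), the upper bound is immediate from the argument in part (2): regardless of the DAG, at most $t$ courses can have been passed after $t$ quarters (since $C=1$), and each state is still a binary vector, so $|\mathcal{L}(t)|\leq\sum_{\tau=0}^{t}\binom{N}{\tau}$; adding prerequisites can only restrict reachability. For the lower bound, I would use the acyclicity of $\mathcal{G}$ to extract a topological order $n_1,n_2,\dots,n_N$ in which every prefix $\{n_1,\dots,n_k\}$ is downward closed under $P(\cdot)$. Because every prefix is closed and every course is offered each quarter, the state with exactly $\{n_1,\dots,n_k\}$ passed is reachable by quarter $k$ and, by Lemma~1, remains in $\mathcal{L}(t)$ for every $t\geq k$. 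These $t$ states have distinct Hamming weights, so $|\mathcal{L}(t)|\geq t$ (with a $+1$ depending on whether the all-zero state is counted).

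The step that needs the most care is the lower bound in part (3): one must argue that in an arbitrary prerequisite DAG the student can increase the number of passed courses by at least one each quarter. This is the only place where the DAG structure interacts with the dynamics, and the clean way to handle it is the topological extraction above, which reduces the general case to the line-DAG analysis of part (1). Everything else is either an elementary counting identity (part (2)) or a monotone inclusion (Lemma~1 and the subset-cardinality bound).
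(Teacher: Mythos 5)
Your proposal is correct, and for parts (1) and (2) it follows essentially the same route as the paper: prefix states for the line DAG, and arbitrary subsets of cardinality at most $t$ for the empty DAG (the paper counts these forward quarter by quarter, accumulating $1 + N + \binom{N}{2} + \cdots$, whereas you characterize $\mathcal{L}(t)$ directly as $\{S : |S|\leq t\}$; the two are equivalent). Where you genuinely add something is part (3): the paper's proof consists of the single sentence that the line and empty graphs are ``two extreme cases'' of a general DAG, which asserts rather than proves the lower bound $t \leq |\mathcal{L}(t)|$. Your topological-order argument --- extract an order whose prefixes are downward closed under $P(\cdot)$, so that with $C=1$ and universal availability the student can pass one new course per quarter, yielding $t$ reachable states of distinct Hamming weight that persist by Lemma~1 --- is exactly the missing justification, and it correctly isolates the only point where the DAG structure interacts with the dynamics. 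You also rightly flag the off-by-one tension in the statement itself: part (2)'s formula counts the all-zero state via $\binom{N}{0}$, and Lemma~1 forces $s(0)\in\mathcal{L}(t)$, so the consistent count in part (1) would be $t+1$ rather than the paper's $t$ (and both counts implicitly require $t\leq N$). These are defects of the paper's statement and proof, not of yours.
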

\begin{proof}
(1) By any quarter $t$, the student can take and pass at most $t$ courses. The only possible course states that can emerge in quarter $t$ is $\{1\}, \{1,2\},..., \{1,2,...,t\}$. Therefore, there are totally $t$ possible course states.

(2) In quarter 1, the student can take any one of the $N$ courses or does not take any course. The number of possible states is thus $N + 1$. If this student passes this course, he/she can take any one of the remaining $N-1$ courses or does not take any course in quarter 2. The possible states with two courses passed is ${N \choose 2} + N + 1$. Continuing in this way, we obtain the result.

(3) It is easy to see that (1) and (2) are two extreme cases of a general DAG, therefore, the number of possible states is bounded by $t$ and $\sum\limits_{\tau=0}^t {N \choose \tau}$.
\end{proof}

Proposition 1 shows that the possible course states highly depend on the course prerequisite constraints. The number of possible course states grows linearly when the prerequisite is strict while it grows exponentially when the prerequisite is loose. Note that in the above proposition we did not impose any restriction on the course availability. In practice, only a limited number of courses are offered in each quarter, thereby further limiting the size of $|\mathcal{L}(t)|$.

\textbf{Remark}: $\mathcal{L}(t)$ contains all possible course completion states by the end of quarter $t$ if students randomly choose their course sequences. Therefore, $\mathcal{L}(t)$ reflects how diverse the students learning experience can be by adopting a specific curriculum. The larger $\mathcal{L}(t)$, the more diverse learning experience that students can have by the end of quarter $t$. A more diverse student learning experience (which is determined by the curriculum but not the course sequence recommendation) has two implications. On one hand, finding the best course sequence to recommend becomes more difficult since the searching space is larger. On the other hand, the best course sequence may yield better learning outcome.

\subsection{Backward Induction Phase}
The outcome of the Forward Search phase is actually an AND/OR graph where each course subsequence is a subgraph of the AND/OR graph. In this graph, each OR node represents a course state $s \in \mathcal{L}(t)$ in which different possible combinations of subsequence courses can be elected. Each AND node $(s, A)$ corresponds to electing courses $A$ in course state $s$. The AND node also stores a probability distribution over the possible next states by taking these courses, which is computed by \eqref{transition}. The value of the optimal course sequence recommendation for this AND/OR graph can be computed by a bottom-up sweep through the graph. This computation can be viewed as a backward induction. First, the value of all OR nodes that are non-terminal states and all AND nodes are initialized to 0, and the value of all OR nodes that are terminal states are initialized to the reward of the corresponding terminal states. Next, starting from quarter $T$, for each quarter $t$, we update the value of the AND nodes in $\mathcal{H}(t)$ using the value of the OR nodes in $\mathcal{L}(t)$, i.e. $\forall (s,A) \in \mathcal{H}(t)$,
\begin{align}
Q(s, t, A) = \sum\limits_{s'\in \mathcal{L}(t)} p(s'|s,A) V(s', t)
\end{align}
We then update the value of the OR nodes in $\mathcal{L}(t-1)$ using the value of the AND nodes in $\mathcal{H}(t)$, i.e. $\forall s \in \mathcal{L}(t-1)$,
\begin{align}
V(s, t-1) = \max_{A} Q(s, t, A)
\end{align}
and we record the combinations of courses in the course recommendation policy
\begin{align}
\pi^*(s, t-1) = \arg\max _A Q(s, t, A)
\end{align}
As mentioned, depending on the choice of the reward function for the terminal course states, different system objectives can be achieved by solving the above problem. The algorithm is summarized in Algorithm 2 and illustrated in Figure \ref{BackInduction}.

\begin{figure}
  \centering
  \includegraphics[scale = 0.4]{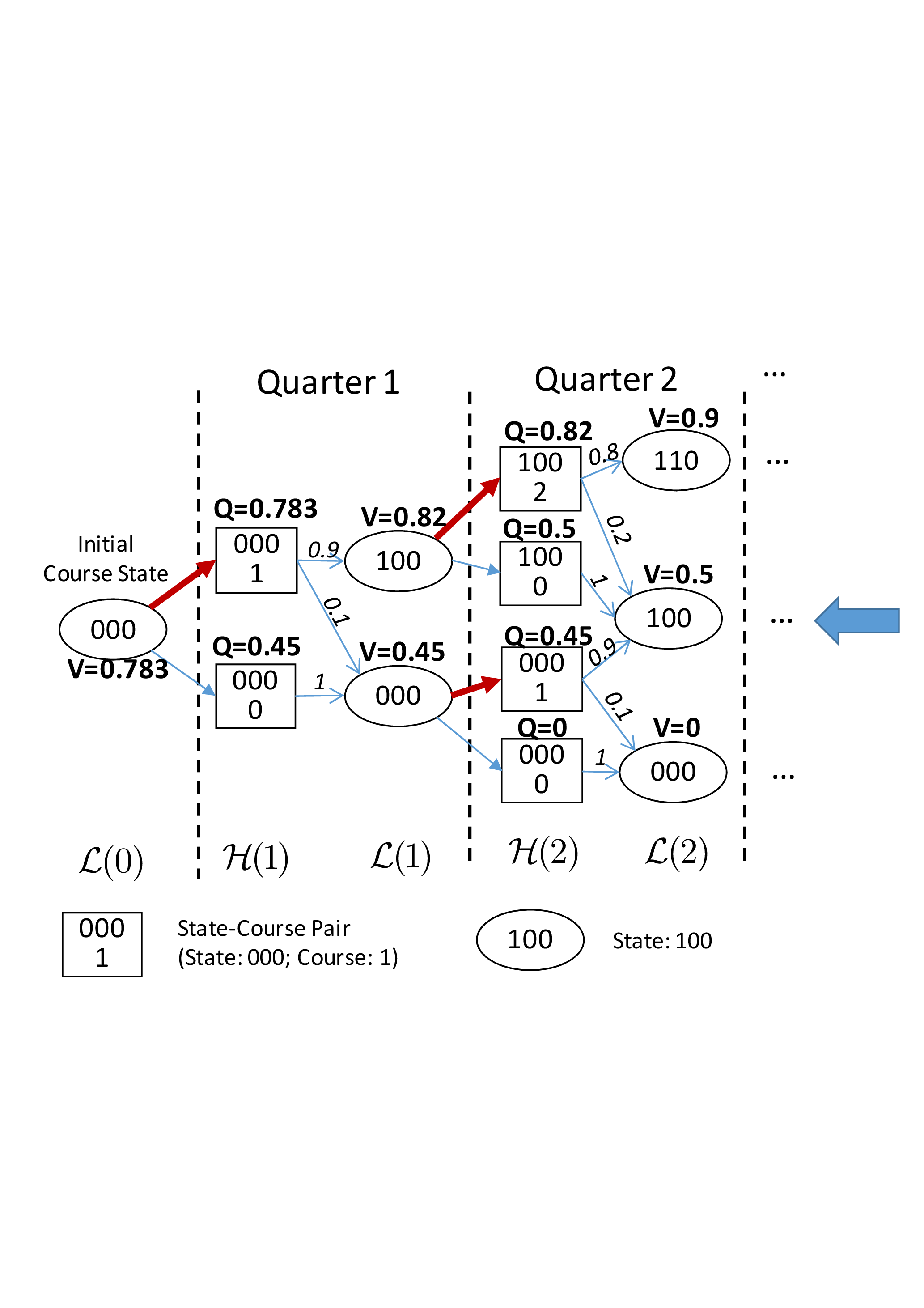}\\
  \caption{Illustration for Backward Induction. The red thick arrow represents the next course elected for each course state. }\label{BackInduction}
\end{figure}

\begin{algorithm}
\caption{Backward Induction}
\begin{algorithmic}[1]
\State \textbf{Initialization}: $\forall s \not\in \hat{s}$, $V(s,t) = 0, Q(s, t, A) = 0$; $\forall s \in \hat{s}$, $V(s,t) = U(s,t)$.
\State \textbf{For} quarter $t=T$ \textbf{To} quarter $1$
\State \indent \textbf{For} each state and courses-to-take pair $s, A\in \mathcal{H}(t)$
\State \indent\indent Update $Q(s, t, A) = \sum\limits_{s'\in\mathcal{L}(t)}p(s'|s,A) V(s', t)$
\State \indent \textbf{End For}
\State \indent \textbf{For} each course state $s \in \mathcal{L}(t-1)$
\State \indent\indent Update value function $V(s,t-1) = \max_A Q(s,t,A)$
\State \indent\indent Update policy $\pi(s,t-1) = \arg\max_A Q(s, t, A)$
\State \indent \textbf{End For}
\State \textbf{End For}
\end{algorithmic}
\end{algorithm}

Next, we analyze the property of the value function $V(s,t)$ and the structure of the optimal policy $\pi^*$. We say $s\prec \tilde{s}$ if $s_n = 1$ implies $\tilde{s}_n = 1$. That is, all courses that are passed in $s$ are also passed in $\tilde{s}$.
\begin{proposition}
For any $t$, $s$ and $\tilde{s}$, if $s\prec \tilde{s}$, then $V(\tilde{s}, t) \geq V(s,t)$.
\end{proposition}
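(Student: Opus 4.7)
The plan is to proceed by backward induction on $t$, from $t = T$ down to $t = 1$. The key structural ingredient will be a coupling: given any feasible action from the smaller state $s$, I will exhibit a matching feasible action from the larger state $\tilde{s}$ whose induced next-state distribution pointwise dominates the original one in the order $\prec$. The induction hypothesis will then upgrade this pathwise dominance to the desired value-function inequality in expectation.

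For the base case $t = T$, the algorithm sets $V(s, T) = U(s, T)$ when $s$ is terminal and $V(s, T) = 0$ otherwise. I first note that $s \prec \tilde{s}$ together with $s$ terminal forces $\tilde{s}$ terminal, since every mandatory course passed in $s$ is still passed in $\tilde{s}$ and the elective count is monotone in $\prec$. Under the natural standing assumptions that $U \geq 0$ and $U(\cdot, t)$ is monotone in $\prec$ --- both trivially satisfied by the illustrative rewards $U(\hat{s}, t) \equiv 1$ and $U(\hat{s}, t) = T - t + 1$ given in the excerpt --- the inequality $V(\tilde{s}, T) \geq V(s, T)$ then follows in all three subcases (both non-terminal, only $\tilde{s}$ terminal, or both terminal).

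For the inductive step, suppose $V(\tilde{s}', t) \geq V(s', t)$ whenever $s' \prec \tilde{s}'$, and fix any $A \in \mathcal{A}(t, s)$. Define the remapped action $A' := A \setminus \{n : \tilde{s}_n = 1\}$; I will verify $A' \in \mathcal{A}(t, \tilde{s})$ clause by clause: every $n \in A'$ has $\tilde{s}_n = 0$ by construction, satisfies $n \in \Gamma(t)$ by inheritance from $A$, and inherits $\tilde{s}_m = 1$ for every prerequisite $m \in P(n)$ from $s_m = 1$ via $s \prec \tilde{s}$; moreover $|A'| \leq |A| \leq C$. I then couple the two transitions with a common family of uniform $[0,1]$ variables $\{\xi_n\}_{n \in A}$, declaring $n$ passed from $(s, A)$ iff $\xi_n \leq 1 - \epsilon_n(|A|)$ and passed from $(\tilde{s}, A')$ iff $\xi_n \leq 1 - \epsilon_n(|A'|)$. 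Since $|A'| \leq |A|$ and $\epsilon_n$ is nondecreasing in its argument, the second threshold dominates the first. A three-way case check --- over $n \in A'$, over $n \in A \setminus A'$ (for which $\tilde{s}_n = 1$ already, so $\tilde{s}'_n = 1$ automatically), and over $n \notin A$ (unchanged in both processes) --- then shows $s' \prec \tilde{s}'$ almost surely under this coupling. Applying the induction hypothesis pointwise and taking expectations yields $Q(\tilde{s}, t, A') \geq Q(s, t, A)$, and maximizing over $A$ while using $V(\tilde{s}, t-1) \geq Q(\tilde{s}, t, A')$ closes the step.

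The main obstacle, and essentially the only real subtlety, is that $\mathcal{A}(t, s)$ and $\mathcal{A}(t, \tilde{s})$ are not related by inclusion: they are defined via different ``not-yet-taken'' masks, so one cannot simply compare $Q(s, t, A)$ and $Q(\tilde{s}, t, A)$ for a common $A$. The remapping $A \mapsto A \setminus \{n : \tilde{s}_n = 1\}$, combined with the monotonicity $\epsilon_n(k') \leq \epsilon_n(k)$ for $k' \leq k$ already assumed in the model, is precisely what bridges this gap and lets the coupling go through.
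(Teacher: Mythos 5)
Your proof is correct and follows essentially the same route as the paper's: backward induction on $t$, matching each action $A$ feasible from $s$ with the action $A' = A \setminus \{n : \tilde{s}_n = 1\}$ feasible from $\tilde{s}$ (the paper's Case~2 action $\tilde{A}$ is exactly this set, and its Case~1 is your special case $A' = A$). Your explicit coupling and your observation that the base case needs $U$ nonnegative and monotone with respect to $\prec$ tighten two points the paper treats informally --- its Case~2 ``relaxation'' bound and its ``straightforward'' base case --- but the underlying argument is the same.
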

\begin{proof}
Let $A^*$ be the optimal courses for $s$, i.e. $V(s(t),t) = Q(s(t), t+1, A^*)$. To prove $V(\tilde{s}(t), t) \geq V(s(t), t)$, we will prove for any $A^*$, we can find $\tilde{A} \in \mathcal{A}(t+1, \tilde{s}(t))$ such that $Q(\tilde{s}(t), t+1, \tilde{A}) \geq Q(s(t), t+1, A^*)$. Since $V(\tilde{s}(t),t) \geq Q(\tilde{s}(t), t+1, \tilde{A})$, we will get $V(\tilde{s}, t) \geq V(s,t)$. The proof uses induction. It is straightforward to see that $V(\tilde{s}, T) \geq V(s, T)$ since if $s$ is a terminal state, $\tilde{s}$ must also be a terminal state. Suppose the claim holds for all $t+1, ..., T$.

Case 1: $A^* \in \mathcal{A}(t+1, \tilde{s}(t))$. In this case, we let $\tilde{A} = A^*$, namely we select the same set of courses for the student to take. Since the courses taken are the same, the probability to pass each of these courses is the same. For each possible next state $s'(t+1)$, there must exist a corresponding next state $\tilde{s}'(t+1)$ such that $p(s'(t+1)|s(t),A^*) = p(\tilde{s}'(t+1)|\tilde{s}(t),A^*)$. Moreover, $s'(t+1) \prec \tilde{s}'(t+1)$. By induction, we have $V(s'(t+1), t+1) \leq V(\tilde{s}'(t+1), t+1)$. Therefore,
\begin{align}
&Q(s(t), t+1, A^*) \nonumber\\
=& \sum\limits_{s'(t+1)} p(s'(t+1)|s(t+1),A^*) V(s'(t+1), t+1)\nonumber\\
=&\sum\limits_{s'(t+1)} p(\tilde{s}'(t+1)|\tilde{s}(t),A^*) V(s'(t+1), t+1)\nonumber\\
\leq &\sum\limits_{\tilde{s}'(t+1)} p(\tilde{s}'(t+1)|\tilde{s}(t),A^*) V(\tilde{s}'(t+1), t+1) \nonumber\\
= &Q(\tilde{s}(t), t+1, A^*)
\end{align}

Case 2: $A^* \not\in \mathcal{A}(t+1, \tilde{s}(t))$. In this case, we let $\tilde{A}$ to be the largest subset of $A^*$ that belongs to $\mathcal{A}(t+1, \tilde{s}(t))$. Moreover, since $s\prec \tilde{s}$, the remaining subset of $A^*$ must be courses that have already been passed in $\tilde{s}$. Suppose that courses in $A^*-\tilde{A}$ are passed with probability 1 starting from $s(t)$ and the remaining courses are passed with probability $1-\epsilon_n(|\tilde{A}|) > 1-\epsilon_n(|A^*|)$. Due to induction, such relaxation provides an upper bound on $Q(s(t), t+1, A^*)$. Moreover, this relaxation is the same as
\begin{align}
&\sum\limits_{\tilde{s}'(t+1)} p(\tilde{s}'(t+1)|\tilde{s}(t),A^*) V(\tilde{s}'(t+1), t+1) \nonumber\\
=& Q(\tilde{s}(t), t+1, \tilde{A})
\end{align}
Thus, $Q(s(t), t+1, A^*) \leq Q(\tilde{s}(t), t+1, \tilde{A})$.
\end{proof}
Proposition 2 implies that a course state where a larger set of courses have been passed has a higher value since there is more flexibility in choosing subsequent course sequences. A question naturally arises that is it always better to take as many courses as possible in any quarter? The answer turns out to be correct only in certain scenarios. The following proposition identifies one of such scenarios.

\begin{proposition}
Suppose $\mathcal{E} = \emptyset$ (i.e. no prerequisites for all courses) and $\Gamma(t) = \mathcal{N},\forall t$ (i.e. each course is offered in all quarters), if $\epsilon_n(|A|) = \epsilon_n$ is a constant, then at any $t$ and given state $s(t-1)$, the optimal policy recommends that the student should take the maximum number of $C$ courses.
\end{proposition}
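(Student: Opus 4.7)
My plan is to prove the equivalent strengthening that $Q(s, t, A) \leq Q(s, t, A \cup \{n\})$ whenever $|A| < C$ and $n \in \mathcal{F}(t, s) \setminus A$. Once this monotonicity is in hand, any maximizer in the update for $\pi^*(s, t-1)$ must have cardinality $\min(C, |\mathcal{F}(t, s)|)$, which under the stated hypotheses equals $C$ whenever at least $C$ unpassed courses remain. The three hypotheses will be used in two distinct ways: the combination $\mathcal{E} = \emptyset$ and $\Gamma(t) = \mathcal{N}$ makes $\mathcal{F}(t, s) = \{n : s_n = 0\}$, so enlarging $A$ by any unpassed course preserves feasibility; and the constancy $\epsilon_n(|A|) = \epsilon_n$ makes the transition kernel in \eqref{transition} factor into a product of independent Bernoullis whose marginal parameters do not shift as $A$ grows.

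The heart of the argument will be a one-line coupling. I will introduce independent Bernoulli variables $\{X_m\}_{m \in A}$ with $\Pr(X_m = 1) = 1-\epsilon_m$ together with an independent $Y$ with $\Pr(Y = 1) = 1-\epsilon_n$, and then use $\{X_m\}$ alone to realize the random next state $S_A$ produced by electing $A$ from $s$, and $\{X_m\} \cup \{Y\}$ to realize $S_{A \cup \{n\}}$. By construction the two random states agree on every coordinate except coordinate $n$, where $(S_{A \cup \{n\}})_n = Y$ and $(S_A)_n = 0$; hence $S_A \prec S_{A \cup \{n\}}$ almost surely. Proposition 2 then yields $V(S_A, t) \leq V(S_{A \cup \{n\}}, t)$ pointwise, and taking expectations gives the required inequality for $Q$, since by definition $Q(s, t, A') = \mathbb{E}[V(S_{A'}, t)]$.

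No backward induction on $t$ is needed here: Proposition 2 has already performed the cross-quarter work, and the step above transfers its monotonicity from $V$ to $Q$ in a single move. The point I expect to be most delicate is justifying the coupling cleanly; this is precisely where the assumption $\epsilon_n(|A|) = \epsilon_n$ is indispensable, for if the failure probability depended on $|A|$ the marginal pass probabilities of courses in $A$ would differ between the two scenarios and the pointwise dominance $S_A \prec S_{A \cup \{n\}}$ would fail. The assumption $\Gamma(t) = \mathcal{N}$ is also doing real work in the conclusion, since it removes any strategic incentive to defer a course to a later quarter, and the absence of prerequisites guarantees that the greedy act of adding one more unpassed course never leaves $\mathcal{A}(t, s)$.
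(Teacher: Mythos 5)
Your proposal is correct and follows essentially the same route as the paper: the paper proves the same one-course-at-a-time monotonicity $Q(s,K)\leq Q(s,K+1)$ by splitting each transition probability as $p(s'|s,K)=\epsilon_n p(s'|s,K)+(1-\epsilon_n)p(s'|s,K)$ and invoking Proposition 2, which is exactly the distributional content of your Bernoulli coupling. Your version is a cleaner phrasing of the same argument (and correctly flags where the constancy of $\epsilon_n$ and the feasibility assumptions are used), but it is not a different proof.
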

\begin{proof}
Consider any policy that selects $K < C$ courses in quarter $t$ and a policy that selects these $K$ courses plus one more course $n$. For any possible next course state $s'$, by selecting one more course, we have
\begin{align}
p(s'|s,K) = p(s'|s, K+1) + p(\tilde{s}'|s, K+1)
\end{align}
where $s'\prec \hat{s}'$ and $\hat{s}'_n = 1$. Due to the course failure probability being independent, $p(s'|s, C) = \epsilon_n p(s'|s,K)$ and $p(\tilde{s}'|s, C) = (1-\epsilon_n)p(s'|s,K)$. Since $V(\tilde{s}') \geq V(\tilde{s})$ according to Proposition 2, we have $Q(s,K) \leq Q(s, K+1)$. Therefore, taking $C$ courses yields higher value than taking fewer courses.
\end{proof}

Proposition 3 states that it is always better to take many courses as early as possible when there is no constraint on course prerequisite and course availability. In practice, course sequence recommendations are subject to many constraints. We provide a counter-example below to show that the result of proposition 3 does not hold in the general case.

\textbf{\textit{Counter-Example}}: Consider a program consisting of two courses and two quarters. Thus $\mathcal{N} = \{1,2\}$ and $T = 2$. Assume that there is no prerequisite course dependency, i.e. $\mathcal{E} = \emptyset$. The course availability is $\Gamma(1) = \{1,2\}$ and $\Gamma(2) = \{2\}$. That is, course 1 is offered only in the first quarter. $C = 2$ so students are allowed to take up to 2 courses. Let $\epsilon(K)$ denote the probability of failing a course when $K$ courses are taken simultaneously.
\begin{itemize}
  \item Option 1: Take 2 courses in quarter 1. The probability of graduation on time can be computed as $(1-\epsilon(2))(1-\epsilon(2) + \epsilon(2)(1-\epsilon(1))) = (1-\epsilon(2))(1-\epsilon(2)\epsilon(1))$
  \item Option 2: Take course 1 only in quarter 1. The probability of graduation on time can be computed as $(1-\epsilon(1))(1-\epsilon(1))$.
\end{itemize}
Thus if $(1-\epsilon(2))(1-\epsilon(2)\epsilon(1)) < (1-\epsilon(1))(1-\epsilon(1))$, then taking only 1 course in the first quarter leads to a higher probability of graduation. For instance, taking $\epsilon(1) = 0.1$ and $\epsilon(2) = 0.2$ satisfies this condition. This counter-example demonstrates the need for carefully planning the course sequence according to the course prerequisite and course availability because myopic course selection may lead to lower learning reward if these constraints are ignored.

\textbf{Remark}: The complexity of the proposed forward-search backward-induction algorithm to determine the optimal policy depends on, the number of courses, the specific course prerequisite and availability constraints. As shown in Proposition 1, the set of possible course states grow at different speeds depending on these constraints. The time and memory complexity generally depend on the size of $\mathcal{L}(t)$ and $\mathcal{H}(t)$, namely $O(\sum\limits_{t=0}^T(|\mathcal{L}(t)| + |\mathcal{H}(t)|))$, which can be large in certain scenarios. However, since our algorithm in this section is an offline algorithm and only needs to be executed once, complexity is not a big concern.

\subsection{Implications on Curriculum Planning}
Teaching resources are limited. An important question for curriculum planning is how to allocate the limited available teaching resources to the courses to minimize students' time-to-graduation. Our framework can be helpful in answering part of this question. The proposition below shows, in a simplified scenario, that courses with many dependent courses should receive more teaching resource in order to minimize the time-to-graduation.

\begin{proposition}
Consider $N+1$ courses and the prerequisite DAG satisfies $P(n) = \{1\}, \forall n > 1$ (i.e. course 1 is the prerequisite course of all other courses). Consider the following two cases of course availability constraints
\begin{itemize}
  \item Case 1: Course 1 is offered in each quarter with probability $p < 1$ and all other courses is offered in each quarter with probability 1.
  \item Case 2: Course 2 is offered in each quarter with probability $p < 1$ and all other courses is offered in each quarter with probability 1.
\end{itemize}
Assume $C = 1$ and if $p < \frac{1}{\sqrt{N} + 1}$, then the expected graduation time in case 1 is larger than in case 2.
\end{proposition}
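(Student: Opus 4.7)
The plan is to compute the expected graduation time in each case under the optimal policy returned by Algorithm~2 and then compare the closed-form expressions directly. I take the per-course failure probability to be zero (the prerequisite and availability constraints being the only source of delay), so both computations reduce to simple geometric waiting-time problems.

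In Case 1, no course can be taken until course 1 is offered: let $\tau_1$ be the first quarter course 1 is offered, so $\tau_1$ is geometric with parameter $p$ and $\mathbb{E}[\tau_1]=1/p$. Once course 1 is passed in quarter $\tau_1$, the remaining $N$ courses are always available and prerequisite-free, and are taken one per quarter; hence the expected graduation time is
\[
G_1 = \frac{1}{p} + N.
\]

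In Case 2, course 1 is taken in quarter 1 (always available), after which the student must complete $\{2,\ldots,N+1\}$ with course 2 as the only bottleneck. I would first argue that the greedy bottleneck-first policy --- take course 2 whenever it is offered and untaken, otherwise take any untaken always-available course --- is optimal. This follows from a sample-path exchange argument: any policy that defers an available course 2 in favour of a filler $f$ can be weakly improved by swapping course 2 at the deferral quarter with the filler at the later quarter when course 2 is actually taken; feasibility is preserved since $f$ is always available, and by Proposition~2 the swap weakly increases the value function because the intermediate completion states of the new policy dominate those of the original. Under the greedy policy, let $\tau_2$ be the first quarter $\geq 2$ in which course 2 is offered. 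The graduation quarter equals $\max(N+1,\tau_2)$: if $\tau_2 \leq N+1$, course 2 slots in among the $N-1$ fillers and the student finishes at quarter $N+1$; otherwise the fillers are exhausted by quarter $N$ and the student then waits for course 2, with an additional geometric wait of mean $1/p$ by memorylessness. Using $\mathbb{E}[(\tau_2-(N+1))^+] = (1-p)^N/p$ yields
\[
G_2 = N + 1 + \frac{(1-p)^N}{p}.
\]

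Finally, subtracting and applying the geometric-series identity $\sum_{k=0}^{N-1}(1-p)^k = (1-(1-p)^N)/p$ gives
\[
G_1 - G_2 = \frac{1-(1-p)^N}{p} - 1 = \sum_{k=1}^{N-1}(1-p)^k > 0
\]
for every $N\geq 2$ and $p\in(0,1)$, so the hypothesis $p<1/(\sqrt N+1)$ is comfortably sufficient (it only needs to force $p<1$ at the quantitative level of this bound). The main obstacle is the optimality proof for the greedy bottleneck-first policy in Case 2; the waiting-time computations themselves are elementary. I would formalize the exchange by coupling the course-2 availability realizations across the two policies and inducting on the number of deferred bottleneck opportunities, invoking Proposition~2 at each swap to certify monotone improvement of the value function.
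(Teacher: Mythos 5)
Your proof is correct and in fact establishes a strictly stronger statement than the paper does, via a genuinely different route. The paper keeps a uniform per-attempt failure probability $\epsilon$, computes the Case 1 time exactly as $\frac{1}{(1-\epsilon)p}+\frac{N}{1-\epsilon}$, and bounds the Case 2 time from above by $\frac{1}{1-\epsilon}+\max\bigl\{\frac{N-1}{(1-\epsilon)(1-p)},\frac{1}{(1-\epsilon)p}\bigr\}$, treating the course-2 stream and the filler stream as two parallel waiting processes; the threshold $p<\frac{1}{\sqrt{N}+1}$ is exactly what is needed for that crude bound to beat the Case 1 value. You instead compute the Case 2 expectation in closed form, $N+1+(1-p)^N/p$, via $\mathbb{E}[\max(N+1,\tau_2)]=N+1+\mathbb{E}[(\tau_2-(N+1))^{+}]$, and the resulting difference $\sum_{k=1}^{N-1}(1-p)^k$ is positive for every $p\in(0,1)$ and $N\ge 2$, so the hypothesis on $p$ is revealed to be an artifact of the paper's bounding step rather than a genuine requirement. (Your formula also correctly shows that for $N=1$ the two cases tie, a degenerate case the paper's argument glosses over.)

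Two remarks. First, the ``main obstacle'' you identify --- optimality of the greedy bottleneck-first policy in Case 2 --- is not actually needed. The Case 1 value $1/p+N$ is forced (with $C=1$ nothing can be attempted before course 1 is offered and passed, and thereafter one course per quarter is the best possible), while \emph{any} feasible Case 2 policy gives an upper bound on the Case 2 optimum. Exhibiting the greedy policy and computing its expected time therefore already closes the argument, and the exchange/coupling machinery with its appeal to Proposition 2 can be dropped entirely. Second, you set the failure probability to zero, whereas the paper's proof carries a uniform $\epsilon>0$ throughout. With failures the clean identity ``graduation quarter $=\max(N+1,\tau_2)$'' no longer holds, since the course-2 completion time becomes geometric with parameter $p(1-\epsilon)$ interleaved with geometric filler attempts, so your exact computation does not extend verbatim; you should either state the no-failure restriction explicitly or argue separately that overlapping the wait for course 2 with filler attempts still yields a strict improvement over $\frac{1}{p(1-\epsilon)}+\frac{N}{1-\epsilon}$.
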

\begin{proof}
Since course 1 is the prerequisite of all the other courses, it has to be passed before any other course can be taken.

Case 1: The expected time to finish course 1 is $\frac{1}{(1-\epsilon)p}$ where $\epsilon$ is the probability of failing a course. The expected time to finish the remaining $N$ courses is $\frac{N}{1-\epsilon}$. Thus, the total expected time to graduate is
\begin{align}
\tau_1 = \frac{1}{(1-\epsilon)p} + \frac{N}{1-\epsilon}
\end{align}

Case 2: The expected time to finish course 1 is $\frac{1}{(1-\epsilon)}$. Computing the expected time to finish the remaining $N$ courses is more complicated since one of the courses has different course availability than others. We consider an upper bound on this time. In particular, this upper bound is
\begin{align}
\tau_2 < \frac{1}{1-\epsilon} + \max\{\frac{N-1}{(1-\epsilon)(1-p)}, \frac{1}{(1-\epsilon)p}\} \triangleq \bar{\tau}_2
\end{align}
Depending on the values of $p$ and $N$, $\bar{\tau}$ can be written as
\begin{equation}
\bar{\tau}_2 = \left\{
\begin{array}{ll}
\frac{1}{1-\epsilon} + \frac{1}{(1-\epsilon)p}, &\textrm{if } p\leq\frac{1}{N}\\
\frac{1}{1-\epsilon} + \frac{N-1}{(1-\epsilon)(1-p)}, &\textrm{if } p > \frac{1}{N}
\end{array}\right.
\end{equation}
If $p \leq \frac{1}{N}$, it is easy to see that $\bar{\tau}_2 < \tau_1$ and hence $\tau_2 < \tau_1$. If $p > \frac{1}{N}$,
\begin{align}
&\tau_1 - \bar{\tau}_2 = \frac{1}{1-\epsilon}\left(\frac{1}{p} + N - 1 - \frac{N-1}{1-p} \right)\nonumber\\
=&\frac{1}{1-\epsilon}\frac{(1-p)^2 - p^2 N}{p(1-p)}
\end{align}
Therefore, if $p < \frac{1}{\sqrt{N} + 1}$, then $\tau_1 > \bar{\tau}_2 > \tau_2$.
\end{proof}
It is intuitive to understand that courses that are prerequisites for many other courses are more ``important''. Subsequent courses can be taken only if the prerequisite course is passed. Therefore, much time will be wasted if students cannot take the prerequisite courses. On the other hand, even if a student needs to but cannot take a course that is not a prerequisite for other courses, he/she can still take other courses while waiting for this course to become available.

\subsection{Joint Optimization of Time-to-Graduation and GPA Performance}
So far, we focused on constructing course sequence recommendation policies that minimize the time-to-graduation. However, the exact learning performance upon graduation (e.g. GPA) is neglected. Nevertheless, the above dynamic programming based framework can be easily extended to the case of joint optimization of time-to-graduation and GPA performance, provided that a sufficiently large dataset is available to estimate the various model parameters. We elaborate on this point below.

The grade that a student can receive in a course often depends on the grades that the student received in the prerequisite courses and perhaps how long ago the prerequisite courses were taken. Thus, the course sequence that the student is taking may have a significant impact on the GPA that he/she can obtain. To account for this effect, we modify the problem formulation: instead of keeping a course completion state, which records the courses that have been taken and passed, we keep a course performance state, which records the grades that the student has received in the passed courses and when these courses were taken. Then for each performance state, the policy tries to find the set of next courses to take in order to maximize an objective function that jointly considers the time of graduation and the obtained final GPA. However,  solving this problem requires addressing several key challenges. First, the course performance state space is significantly larger than the course completion state space and grows exponentially with the number of possible grades. In particular, suppose the number of grade levels is $K$, then the number of all possible states is $K^N$. Second, a huge dataset of student records is needed to estimate the conditional probabilities of grades of each course depending on all possible course performance states. Therefore, solving the optimal course recommendation policy is extremely difficult.

To derive efficient solutions without a large initial dataset, we construct course sequence recommendation policies that jointly consider the GPA and time to graduation in two steps. Firstly, we determine a set of course sequence recommendation policies that satisfy desired time to graduation constraints using our method developed in this section. Since there is a lack of dataset to estimate the course failure probabilities, the course sequence recommendation policy can even be constructed by ignoring the course failure probabilities (i.e. treating $\epsilon_n \to 0, \forall n$). In the second step, we maximize the student GPA by considering only the policies derived in the first step. In particular, we aim to select for each student a personalized policy that most suits this student and results in the highest GPA. We formalize the personalized policy selection problem in the next section.

\textbf{Remark}: Readers may wonder why there are multiple solutions of course sequence recommendation policies from the first step so that personalization is possible in the second step. A couple of reasons can lead to multiple solutions. First, multiple solutions may occur due to ties, which are more likely to happen when the randomness disappears as $\epsilon_n \to 0$. For instance, consider 4 courses $\{1,2,3,4\}$ and course 1 is the prerequisite of the other three courses. Assume $C = 2$ and all courses are offered in all quarters, then course sequences $1 \to \{2,3\} \to 4$, $1\to \{2, 4\} \to 3$ and $1\to \{3, 4\} \to 2$ all lead to the same shortest time-to-graduation. Second, instead of keeping only the course sequence recommendation policy that yields the shortest time-to-graduation, the first step of our approach can also generate a set of course sequence recommendation policies that result in on-time graduation.

\section{Online Recommendation Policy Personalization}

\subsection{Problem Formulation}
We consider an online setting where students enter the program in sequence. The students are indexed by $\{1,2,...,i,...\}$. Students come with different background (e.g. schools from which the students graduated, SAT scores). We use a context vector $\theta_i \in \Theta$ to denote the student background where $\Theta = [0,1]^W$ is the normalized context space with dimension $W$. We have a set of $Z$ course sequence recommendation policies constructed, denoted by $\mathcal{Z}$, using our method proposed in Section III. These recommendation policies ensure that students will graduate early with high probability. However, the impact of these recommendation policies on the students' GPA performance is unknown a priori and may be different for students with different backgrounds.

For each student $i$, the system selects one of the $Z$ policies to recommended course sequence to this student. When the student completes the program by following the recommended course sequence, the GPA that he/she obtains is revealed as $r_i$. Let $\mu_z(\theta) = \mathbb{E}\{r|\theta\}$ be the expected GPA for the student with background $\theta$ if a recommendation policy $z$ is adopted. If $\mu_z(\theta)$ were known for each policy $z$, then the policy selection problem would have been simple - selecting $z^*(\theta) = \arg\max_{z} \mu_z(\theta)$ maximizes the expected GPA for this student. However, since the effectiveness of the recommendation policies is unknown a priori, the best policies must be learned for each student.

Let $\sigma$ be an online learning algorithm for policy selection and $\sigma(i) \in \mathcal{Z}$ denote the policy that is used on student $i$. We use learning regret as the performance metric for a learning algorithm. The learning regret up to student $I$ is defined as the aggregate GPA difference between our learning algorithm and the oracle solution that selects the best policy $z^*(\theta_i), \forall i$, i.e.
\begin{align}
\textrm{Reg}(I) := \mathbb{E}[\sum\limits_{i=1}^I \mu_{z^*(\theta_i)}(\theta_i) - \sum\limits_{i=1}^I r_i(\sigma(i))]
\end{align}
where the expectation is taken with respect to the randomness in grade realization and the selected policies. The regret characterizes the loss incurred due to the unknown system dynamics and gives convergence rate of the total expected GPA of the learning algorithm to the value of the oracle solution. The regret is non-decreasing in the total number of incoming students, but we want it to increase as slow as possible. Any algorithm whose regret is sublinear in $I$, i.e. $\textrm{Reg}(I) = O(I^\alpha)$ such that $\alpha < 1$, will converge to the optimal solution in terms of the average reward, i.e. $\lim\limits_{I\to\infty}\frac{\textrm{Reg}(I)}{I} = 0$. The regret of learning also gives a measure for the rate of learning. A smaller $\alpha$ will result in a faster convergence to the optimal average reward and thus, learning the optimal course sequence recommendation is faster if $\alpha$ is smaller.

\subsection{Context-Aware Adaptive Policy Selection}
A natural way to learn a course sequence recommendation policy's effectiveness is to record and update the sample mean GPA obtained as students arrive and complete the program by adopting this policy. Using such a sample mean-based approach for policy selection is the basic idea of our learning algorithm. However, major challenges still remain. Without using the context information, we have only learned the average performance of each recommendation policy and thus, a single policy will always be selected. On the other hand, personalizing the policy for each student according to his/her background can be very difficult since the students can have diverse background and hence the context space $\Theta$ can be huge. The sample mean reward approach can fail to work since there will be very limited number of students who have the same background. Our method to overcome this difficulty is by exploiting the similarity of students based on the assumption that students with similar background will achieve similar \textit{expected} GPA by following the same course sequence recommendation policy. Our learning algorithm starts with a larger context space to learn the best recommendation policy for this space and then gradually refines the learning by partitioning the context space into smaller spaces.

Before we describe the details of our algorithm, we introduce several useful concepts.
\begin{itemize}
  \item \textbf{Student Cluster}. A student cluster is represented by the range of context information that is associated with students in the cluster. In this paper, we consider student clusters that are created by uniformly partitioning the context space on each dimension, which are enough to guarantee sublinear learning regrets. Thus, each student cluster is a $W$-dimensional hypercube with side length being $2^{-l}$ for some $l$. This hypercube represents a level-$l$ student cluster. At any moment in time when a recommendation policy is applied to a student $i$, the algorithm keeps a set of mutually exclusive student clusters that cover the entire student population. We call these student clusters the active student clusters, and denote this set by $\Omega$. Since the active student clusters evolve (i.e. become more refined) as more students are enrolled and graduate, the active set $\Omega^i$ uses a superscript $i$, which is the student index, to represent its dynamic nature. For instance, in the one-dimensional case, $\{[0, 1/2), [1/2,1]\}$ is a feasible set of active student clusters and $\{[0, 1/4), [1/4, 1/2), [1/2, 3/4), [3/4, 7/8), [7/8, 1]\}$ is another feasible set of active student clusters. Figure \ref{Clustering} illustrates a 2-dimensional student clustering.
  \item \textbf{Counters}. For each active student cluster $\mathcal{C}$, the algorithm maintains $Z$ counters: for each recommendation policy $z \in \mathcal{Z}$, $M_\mathcal{C}(z)$ records the number of students so far in which $z$ is applied to.
  \item \textbf{GPA Estimates}. For each active student cluster, the algorithm also maintains the sample mean GPA estimates $\bar{r}_\mathcal{C}(z)$ for each policy $z \in \mathcal{Z}$ using the realized GPA of students that belong to $\mathcal{C}$ so far.
\end{itemize}

\begin{figure}
  \centering
  \includegraphics[scale = 0.4]{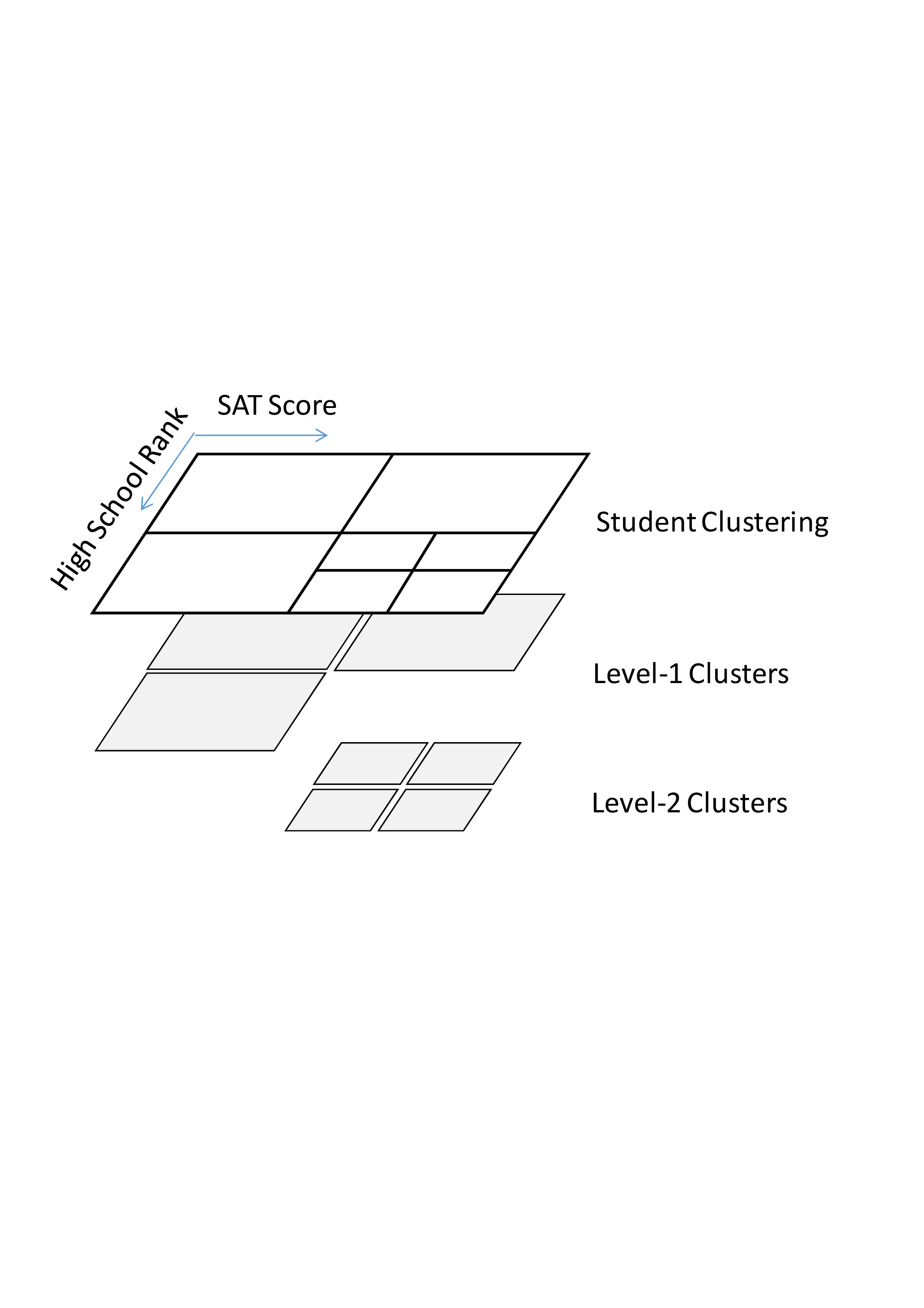}\\
  \caption{Illustration of 2-D Student Clusters.}\label{Clustering}
\end{figure}

The algorithm (see Algorithm 3) works as follows. For each student $i$, the algorithm works in two steps.
\begin{itemize}
  \item \textbf{Policy Selection Step}. When an student arrives, the algorithm first checks which active student cluster $\mathcal{C}\in \Omega^i$ it belongs to. Then it investigates counter $M_\mathcal{C}(z)$ for all $z \in \mathcal{Z}$ to see if there exists any under-explored policy such that $M_\mathcal{C}(z) \leq \gamma(i, l)$ where $\gamma(i, l)$ is a deterministic control function depending on the index of the student $i$ and the level of the student cluster $l$. If there exists such an under-explored policy $z$, then the algorithm uses this policy to recommend course sequence for this student $i$. If there does not exist any under-explored policy, then the algorithm selects the policy with the highest GPA estimate for the cluster $\mathcal{C}$, i.e. $\arg\max_{z}\bar{r}_\mathcal{C}(z)$.
  \item \textbf{Variable Update Step}. After the student completes the program and the GPA is realized, the GPA estimate of the selected policy is updated. 
      Moreover, if the number of students in the student cluster $\mathcal{C}$ satisfies $\sum\limits_{z} M_\mathcal{C}(z) \geq \zeta(l)$ where $\zeta(l)$ is a deterministic control function depending on the level of student cluster, the current student cluster $\mathcal{C}$ is partitioned into $2^W$ level-$(l+1)$ smaller student clusters. For the next student on, $\mathcal{C}$ is deactivated and the new level-$(l+1)$ student clusters are activated.
\end{itemize}

\textbf{Remark}: In the policy selection step, the algorithm may select an under-explored policy for a student. The purpose of this exploration is to learn the effectiveness of every policy with high confidence. However, this exploration does raise fairness issues for some students. There are a couple of solutions that can be used to address this ``unfair'' policy selection issue. First, the policy selection is merely a recommendation, students can still freely choose whichever course sequence they want to follow. Second, rewards mechanisms can be designed to incentivize students to follow the recommended policy. For example, students who follow an under-explored policy can enjoy a lower tuition or receive some form of compensation through some special fellowship.

\begin{algorithm}
\caption{Policy Selection and Adaptive Clustering}
\begin{algorithmic}[1]
\State Initialize $\Omega = \Theta$, $\bar{r}_{\Theta}(z) = 0, M_{\Theta}(\pi) = 0, \forall z\in \mathcal{Z}$.
\For {each student $i$}
\State Determine active cluster $\mathcal{C}\in \Omega^i$ such that $\theta^i\in \mathcal{C}$
\State \textbf{Case 1}: $\exists z \in \mathcal{Z}$ such that $M_\mathcal{C}(z) \leq \gamma(i,l)$
\State ~~~Randomly select among such policies $\sigma^i = z$
\State \textbf{Case 2}: $\forall z \in \mathcal{Z}$, $M_\mathcal{C}(z) > \gamma(i,l)$
\State ~~~Select $\sigma^t = \arg\min\limits_{z\in\mathcal{Z}} \bar{r}_\mathcal{C}(z)$.
\State Set $M_\mathcal{C}(\sigma^i) \leftarrow M_\mathcal{C}(\sigma^i) + 1$
\State (The student GPA $r^i$ is realized.)
\State Update $\bar{r}_\mathcal{C}(\sigma^t)$
\If{$\sum_z M_\mathcal{C}(z) \geq \zeta(l)$}
\State Uniformly partition $\mathcal{C}$ into $2^W$ level-$(l+1)$ student clusters.
\State Update the set of active clusters $\Omega^i$.
\State Update the counters and GPA estimates for all new student clusters
\EndIf
\State \textbf{endif}
\EndFor
\State \textbf{endfor}
\end{algorithmic}
\end{algorithm}

\subsection{Control Function Determination}
In this subsection, we determine the control function $\gamma(i,l)$ and $\zeta(l)$ and evaluate the performance of the policy selection algorithm. The following assumption on student similarity is needed for the regret analysis but not needed in the algorithm.
\begin{assumption}
(Student Similarity). For each policy $z \in \mathcal{Z}$, there exists $\alpha > 0$ such that for all $\theta, \theta'\in \Theta$, we have $|\mu_\theta(z) - \mu_{\theta'}(z)| \leq \|\theta, \theta'\|^\alpha$.
\end{assumption}
The above assumption states that if the student background (i.e. context) is similar, then the expected GPA by using the same course sequence recommendation policy is also similar.

\begin{proposition}
By setting $\gamma(i,l) = 2^{2\alpha l}\ln i$ and $\zeta(l) = A2^{pl}$, the learning regret for students up to $I$ is
\begin{align}
\textrm{Reg}(I) \leq &\sum\limits_{l=1}^{l_{max}(I)}[S_l(I)Z 2^{2\alpha l}\ln I  \nonumber\\
+&Y_l(I)(Z\sum\limits_{i=1}^\infty i^{-2} + A(2W^{\alpha/2} + 2)2^{(p-\alpha)l})]
\end{align}
where $Y_l(I)$ is the number of level-$l$ student clusters that are activated at student $I$, $S_l(I)$ is the number of level-$l$ active student clusters at student $I$ and $l_{max}(I)$ is the maximum level of student clusters at student $I$.
\end{proposition}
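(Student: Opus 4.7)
The plan is to decompose the cumulative regret by (a) the level of the active cluster that contains the student's context and (b) which branch of Algorithm~3 fires, and then apply Hoeffding's inequality calibrated against the similarity bound in Assumption~1.

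First I would fix a level-$l$ cluster $\mathcal{C}$ that has been activated by the time student $I$ arrives and write $\bar\mu_\mathcal{C}(z):=\mathbb{E}[\bar r_\mathcal{C}(z)]$. Because $\mathcal{C}$ is a $W$-dimensional hypercube of side $2^{-l}$, Assumption~1 yields the deterministic discretization bound
$$|\mu_z(\theta)-\bar\mu_\mathcal{C}(z)|\leq W^{\alpha/2}\,2^{-\alpha l}\qquad\text{for every }\theta\in\mathcal{C}.$$
This is the price paid for aggregating non-identical students into a common cluster.

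Next I would handle the exploration branch (Case~1). Since $\gamma(\cdot,l)$ is non-decreasing in $i$, within any fixed level-$l$ active cluster each policy $z$ can trigger Case~1 at most $\gamma(I,l)=2^{2\alpha l}\ln I$ times. Upper bounding the per-step regret by $1$ (GPAs normalized to $[0,1]$) and summing over the $S_l(I)$ currently active level-$l$ clusters gives the first summand $S_l(I)\,Z\,2^{2\alpha l}\ln I$.

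For the exploitation branch (Case~2) I would split further on whether the sample means are accurate. Hoeffding's inequality applied with $M_\mathcal{C}(z)>\gamma(i,l)$ gives
$$\Pr\bigl(|\bar r_\mathcal{C}(z)-\bar\mu_\mathcal{C}(z)|>2^{-\alpha l}\bigr)\leq 2\exp\!\bigl(-2\cdot 2^{-2\alpha l}\cdot 2^{2\alpha l}\ln i\bigr)=2\,i^{-2},$$
precisely because $\gamma$ is tuned to cancel the $2^{-\alpha l}$ deviation scale. A union bound over the $Z$ policies and a sum over $i$ contributes at most $Z\sum_{i=1}^\infty i^{-2}$ of ``bad-exploitation'' regret per activated cluster. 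On the complementary event the greedy pick $\hat z:=\arg\max_z\bar r_\mathcal{C}(z)$ satisfies $\bar r_\mathcal{C}(\hat z)\geq \bar r_\mathcal{C}(z^*(\theta_i))$; chaining the $2^{-\alpha l}$ sample-mean accuracy for both $\hat z$ and $z^*(\theta_i)$ with the $W^{\alpha/2}\,2^{-\alpha l}$ discretization bound yields an instantaneous regret at most $(2+2W^{\alpha/2})\,2^{-\alpha l}$. Because the cluster is split after at most $\zeta(l)=A\cdot 2^{pl}$ students, the good-event contribution per activated cluster is bounded by $A(2W^{\alpha/2}+2)\,2^{(p-\alpha)l}$. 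Summing both exploitation pieces over the $Y_l(I)$ activated level-$l$ clusters and then over $l=1,\ldots,l_{\max}(I)$ produces the remaining bracket in the claimed bound.

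The main obstacle is the joint calibration of the control functions rather than any single inequality. The choice $\gamma(i,l)=2^{2\alpha l}\ln i$ must simultaneously (i) match the Hoeffding tolerance $2^{-\alpha l}$ to the similarity-driven discretization scale $W^{\alpha/2}\,2^{-\alpha l}$ inherited from Assumption~1, and (ii) push the tail probability into the summable $i^{-2}$ regime so that the bad-exploitation mass is $O(1)$ per cluster. The split threshold $\zeta(l)=A\cdot 2^{pl}$ must then be chosen so that the good-event mass scales as $2^{(p-\alpha)l}$ per cluster, setting up the subsequent (not required here) tuning of $p$ that renders the multi-level sum sublinear in $I$. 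Once these balances are locked in, the remainder is careful bookkeeping that keeps track of clusters still active at step $I$ (contributing to $S_l(I)$) versus those ever activated by step $I$ (contributing to $Y_l(I)$).
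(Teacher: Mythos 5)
Your proposal is correct and follows essentially the same route as the paper: the same three-way decomposition into exploration regret (bounded by $S_l(I)Z\,2^{2\alpha l}\ln I$ via the cap $\gamma(I,l)$), suboptimal exploitation (Chernoff--Hoeffding tuned so the tail is summable as $i^{-2}$), and near-optimal exploitation (discretization error $W^{\alpha/2}2^{-\alpha l}$ from Assumption~1 times the cluster lifetime $\zeta(l)$). In fact your writeup supplies the explicit calibration calculations that the paper's proof only sketches, so no changes are needed.
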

\begin{proof}
We break down the learning regret into three parts $\textrm{Reg}(I) = \textrm{Reg}_1(I) + \textrm{Reg}_2(I) + \textrm{Reg}_3(I)$ that are respective regrets due to exploration, selection of suboptimal policies in exploitation and selection of near-optimal policies in exploitation. For each of these three parts we can provide a bound. In particular, $\textrm{Reg}_1(I)$ can be bounded by $\sum\limits_{l=1}^{l_{max}(I)}[S_l(I)Z 2^{2\alpha l}\ln I]$ since the number of exploration steps increases sublinearly in $I$. $\textrm{Reg}_2(I)$ can be bounded by $Y_l(I)Z\sum\limits_{i=1}^\infty i^{-2}$ since the probability of choosing a suboptimal policy in exploitation steps decreases sufficiently rapidly using a Chernoff-Hoeffding bound. $\textrm{Reg}_3(I)$ can be bounded by $Y_l(I)A(2W^{\alpha/2} + 2)2^{(p-\alpha)l}$ since the marginal cost of selecting a near-optimal policy decreases sufficiently rapidly.
\end{proof}

\begin{corollary}
If the student context arrivals by student $I$ is uniformly distributed, we have
\begin{align}
\textrm{Reg}(I) < I^{\frac{2\alpha + W}{3\alpha + W}}[Z(\ln I + \sum\limits_{i=1}^\infty i^{-2}) + (W^{\alpha /2} + 2)2^{2\alpha + W}]
\end{align}
\end{corollary}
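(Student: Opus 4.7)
The plan is to start from the regret decomposition in Proposition~5 and specialise each quantity on the right-hand side to the uniform-arrival setting, then choose the partitioning exponent $p$ so that the exploration term and the near-optimal exploitation term balance at the deepest active level. The statement itself tells us the answer: the exponent $(2\alpha+W)/(3\alpha+W)$ is the solution one gets after equating these two terms, so the hidden choice must be $p=3\alpha$, which I will adopt throughout.

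Concretely, I would first bound the cluster counts. Under uniform arrivals each level-$l$ hypercube has Lebesgue measure $2^{-Wl}$, so it receives on the order of $I\cdot 2^{-Wl}$ students out of the first $I$; hence a level-$l$ cluster is partitioned once this count reaches $\zeta(l)=A\,2^{pl}=A\,2^{3\alpha l}$, which gives the depth bound
\begin{equation*}
l_{\max}(I) \;\leq\; \frac{1}{3\alpha+W}\log_2\!\frac{I}{A}+O(1).
\end{equation*}
At the same time the total number of activated and still-active level-$l$ clusters satisfies $Y_l(I),S_l(I)\leq 2^{Wl}$, since there are only $2^{Wl}$ level-$l$ hypercubes inside $[0,1]^W$. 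Plugging $p=3\alpha$ and these bounds into the Proposition~5 estimate, the exploration term becomes $Z\,2^{(W+2\alpha)l}\ln I$ per level, the suboptimal-exploitation term is at most $Z\,2^{Wl}\sum_i i^{-2}$, and the near-optimal term becomes $A(2W^{\alpha/2}+2)\,2^{(W+2\alpha)l}$ per level. Summing the geometric progression in $l$ up to $l_{\max}(I)$ is dominated, term by term, by $2\cdot 2^{(W+2\alpha)l_{\max}(I)}$, contributing a constant factor of $2^{W+2\alpha}$ coming from the geometric-sum remainder.

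Finally I would substitute the bound on $l_{\max}(I)$ and use
\begin{equation*}
2^{(W+2\alpha)\,l_{\max}(I)} \;\leq\; I^{(W+2\alpha)/(3\alpha+W)},
\end{equation*}
collect the three constants $Z\ln I$, $Z\sum_{i=1}^{\infty} i^{-2}$, and $(W^{\alpha/2}+2)\,2^{2\alpha+W}$ in front, and obtain the stated inequality. The main obstacle is purely bookkeeping: verifying that $p=3\alpha$ really is the value that makes the exploration and near-optimal exploitation regrets have the same growth rate in $l$, and showing that under uniform arrivals the pessimistic bound $Y_l(I)\leq 2^{Wl}$ is tight enough that the $2^{Wl}$ factor combines cleanly with $2^{2\alpha l}$ rather than forcing a worse exponent. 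Once those bookkeeping issues are settled the geometric sum and the depth bound immediately yield the corollary.
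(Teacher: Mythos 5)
Your proposal is correct and is essentially the argument the paper intends: the corollary is stated without proof, and it follows from Proposition~5 exactly as you describe, by taking $p=3\alpha$ so that the exploration and near-optimal-exploitation terms both scale as $2^{(2\alpha+W)l}$, bounding $S_l(I),Y_l(I)\leq 2^{Wl}$, using uniform arrivals to get $l_{\max}(I)\leq 1+\frac{1}{3\alpha+W}\log_2(I/A)$, and summing the geometric series so that the top level dominates and contributes the $2^{2\alpha+W}$ factor. The only loose end is bookkeeping of constants (the corollary implicitly takes $A=1$ and absorbs a factor of $2$ from $2W^{\alpha/2}+2$ versus $W^{\alpha/2}+2$), which does not affect the order of the bound.
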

As we can see, the regret bound is sublinear in the number of students $I$ and hence, if $I$ is sufficiently large, then the average regret will be close to 0, which means that the optimal average GPA is achieved.

\section{Experiments}
In this section, we apply the forward-search backward-induction method presented in Section III, and the regret minimization learning algorithm developed in Section IV to our dataset.

\subsection{Dataset}
Our experiments are based on a dataset from the undergraduate curriculum of the Mechanical and Aerospace Engineering (MAE) department at UCLA. The dataset contains the course sequences and the course grades of 1444 anonymized students who graduated between the academic years 2013 and 2015. The course availability varies across years; typically, a course is offered either once or twice every academic year but some courses are offered once every two years. UCLA adopts the quarter system and in each academic year and courses are mostly offered in Fall, Winter, Spring quarters but not Summer quarters. Therefore, we will consider that one academic year consists of three quarters (hence, four academic years equal 12 quarters). 

The dataset also includes context information of the students, including their SAT scores and their high school GPAs. We observe that many students in the dataset take courses outside of the curriculum (such as the art courses). Our model can be extended to capture this possibility with added model complexity but our experiment in this paper does not consider the recommendation of such courses. Some of the students in the dataset are transfer students, and they do not need to take the same number of courses as the regular students, since they may have fulfilled several requirements before coming to UCLA. Since the course information before the transferring is not included in this dataset, we exclude such transfer students from our analysis. Figure \ref{Pie} shows the graduation time distribution of the students in the dataset. As we can see, even though the majority of students graduate on time within the desired four years (12 quarters), many students do not graduate on time and stay in the college for one or even two more years. There is also a noticeable difference in the graduation time distributions for students with different context information: students with higher math SAT scores have a higher probability to graduate on time. This suggests that personalization based on the students' context information has indeed the potential to provide better learning experience and lead to better learning outcomes.

\begin{figure}
  \centering
  \includegraphics[scale = 0.5]{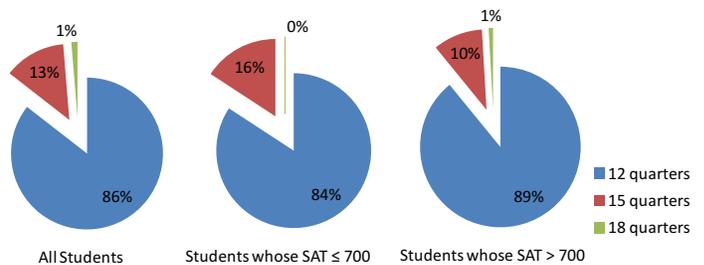}\\
  \caption{Graduation Time Distribution and the Impact of Math SAT score.}\label{Pie}
\end{figure}

\subsection{Impact of constraints}
In this subsection we illustrate how course prerequisite and availability reduce the number of possible course sequence recommendations for students in MAE department at UCLA. Figure \ref{Prerequisite} in Sec. III depicts the prerequisite DAG of the 19 courses used for analysis. Most of these courses are math, physics, and core MAE major courses. Generally speaking, math courses are prerequisites of physics courses which are further prerequisites of MAE major courses. We consider two cases of course availability constraints. In the first case, most courses are offered twice every academic year and in the second case, most courses are offered once every academic year. This allows us to investigate the impact of course availability on course sequence recommendation. We focus on how to recommend course sequences to complete these 19 courses as soon as possible.



First, we investigate the possible course completion states that can emerge. Although there are totally $2^{19} = 524288$ possible states, the course prerequisites and availability significantly limit the number of possible states that can emerge. Figure \ref{States} illustrates the number of possible states (i.e. the size of the state set $\mathcal{L}(t)$ defined in Section III-A). Interestingly, the course prerequisite constraint limits $\lim\limits_{t\to\infty} \mathcal{L}(t)$ as the quarters go by. In our experiment, there are totally 4880 possible completion states depending on whether the student has taken, passed or failed the course, which are significantly fewer than the possible states without any prerequisite constraints. Moreover, the maximal number of possible states are reached within a finite number of quarters. On the other hand, the course availability constraint affects how fast the set $\mathcal{L}(t)$ expands and reaches its maximal size. When courses are offered less frequently, $\mathcal{L}(t)$ expands more slowly but will eventually reach the maximal size.

\begin{figure}
  \centering
  \includegraphics[scale = 0.6]{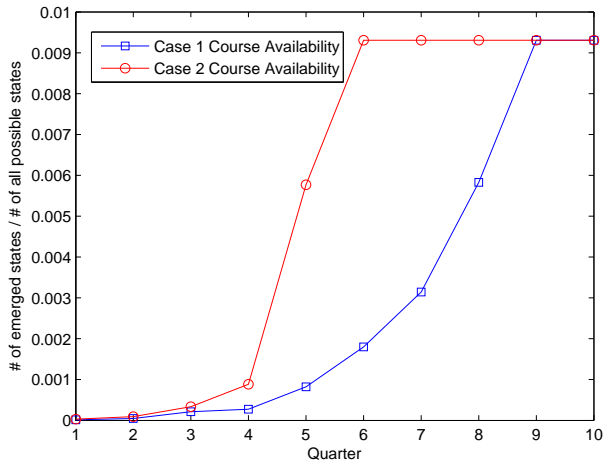}\\
  \caption{Number of Course Completion States.}\label{States}
\end{figure}

\subsection{Course sequences}
In this subsection, we apply the forward-search backward-induction algorithm presented in Sec. III to compute candidate course sequence recommendation policies to students to complete the 19 courses presented in Sec. V-C. Although the course difficulty varies across courses, for illustrative purposes, we set the course failure ratio to be the same $\epsilon = 0.1$ for all courses. Indeed, common practice tells us that teachers often curve the students grades so that the passing ratios of courses do not vary significantly across courses. Table \ref{Best1} and \ref{Best2} show the best sequences that can be obtained for the two course availability cases when the student does not fail any course.
It is worth noting that while some courses are taken after some others in the first case, the order can be reversed in the second case due to different course availability constraints. For instance, course CS 31 is taken in the first quarter before courses MATH 31B and MSE 104 in the first cases but it is taken in the third quarter after courses MATH 31B and MSE 104 in the second case. The generated course sequence is also useful for the students to decide when is good time to take extracurricular courses (e.g. the art courses). For instance, a student should focus on the curriculum courses in quarter 1, 3 and 5 since three courses need to be taken in each of these quarters while quarters 4 and 8 are good time for the student to take extracurricular courses that fall in the student's interest.

\begin{table}
\center
\caption{Best Recommended Course Sequence (Case 1 Course Availability)}\label{Best1}
\begin{tabular}{|c|c|}
  \hline
  Quarter & Recommended Courses \\
  \hline
  1 & MATH 31A, CS 31, MAE 94, CHE 20 \\
  2 & MATH 31B, MSE 104 \\
  3 & PHY 1A, MATH 32A, MATH 33B \\
  4 & PHY 1B, PHY 4AL, MATH 32B \\
  5 & PHY 1C, PHY 4BL, MATH 33A, MAE 101\\
  6 & MAE 105A, MAE 102, MAE 103 \\
  \hline
\end{tabular}
\end{table}

\begin{table}
\center
\caption{Best Recommended Course Sequence (Case 2 Course Availability)}\label{Best2}
\begin{tabular}{|c|c|}
  \hline
  Quarter & Recommended Courses \\
  \hline
  1 & MATH 31A, MAE 94, CHE 20 \\
  2 & MATH 31B, MSE 104 \\
  3 & CS 31, MATH 32A, PHY 1B \\
  4 & MATH 32B \\
  5 & PHY 1A, MATH 33A, MAE 105A\\
  6 & PHY 1B, PHY 4AL \\
  7 & PHY 1C, PHY 4BL \\
  8 & MAE 101\\
  9 & MAE 102, MAE 103\\
  \hline
\end{tabular}
\end{table}

Since students may fail the course, the best course sequence can no longer be followed by a student. Once this happens, it becomes important to determine the subsequent courses to recommend in any possible course completion state. The course sequence recommendation policy generated by our algorithm can provide us with these answers for any course failure ratio. Table \ref{Time} shows the results by running our algorithm aimed at maximizing the on-time graduation probability or minimizing the graduation time for the two course availability cases. When courses are offered more frequently, the expected graduation time is significantly reduced and the on-time graduation probability is significantly improved. The proposed framework can also be used to evaluate different allocations of teaching resources by calculating their expected on-time graduation probability and expected graduation time.

\begin{table}
\center
\caption{Completion Time Results}\label{Time}
\begin{tabular}{|c|c|c|c|}
  \hline
   & \begin{tabular}{@{}c@{}}Probability of \\Completing 19 Courses \\in 10 Quarters\end{tabular}& \begin{tabular}{@{}c@{}}Expected\\Completion\\Time \end{tabular}& \begin{tabular}{@{}c@{}}Best\\Sequence\\Time\end{tabular}\\
   \hline
  Case 1 & 39.70\% & 9.3 quarters & 9 quarters \\
  \hline
  Case 2 & 98.10\% & 6.4 quarters & 6 quarters \\
  \hline
\end{tabular}
\end{table}

\subsection{Personalized policy selection}
As mentioned in Sec. IV, depending on the context information of the students, different course sequence recommendation policies may result in different learning experience. The first step of our framework constructs course sequence recommendation policies to minimize time-to-graduation. The second step then personalizes the recommendation according to the students' context to achieve a high GPA.

Due to the limited number of student records that we have, we focus on the subsequence recommendation for a subset of 3 MAE major courses (i.e. MAE 101, MAE 103, MAE 105A). From our dataset, we observe that there are six  sequences that student use to take these 3 courses. We will use these six typical sequences to validate the proposed personalized policy selection algorithm. Note that we are not using the policies generated in the first step to make course sequence recommendations because the results on the students in the dataset cannot be known if they are not aligned with the actual sequences that the students take. Thus, instead of evaluating the joint efficacy of the policy construction and policy selection, we only evaluate the efficacy of the policy selection algorithm in this subsection.


\begin{table}
\centering
\caption{GPA statistics for the six course sequences}\label{stat}
\begin{tabular}{|c|c|c|c|c|c|c|c|}
  \hline
  \multicolumn{2}{|c|}{Sequence} & 1 & 2 & 3 & 4 & 5 & 6 \\
  \hline
  \multicolumn{2}{|c|}{Number of students} & 89 & 75 & 72 & 51 & 31 & 18 \\
  \hline
  \multirow{ 2}{*}{SAT$\leq$700}  & mean & \textbf{3.36} & 3.02 & 3.08 & 3.05 & \textbf{3.31} & 3.09 \\
  & count & 28 & 20 & 28 & 13 & 21 & 4 \\
  \hline
  \multirow{ 2}{*}{700$<$SAT$\leq$760}  & mean & 3.28 & \textbf{3.37} & 3.29 & 3.17 & 3.26 & \textbf{3.39} \\
  & count & 38 & 22 & 22 & 16 & 5 & 5 \\
  \hline
  \multirow{ 2}{*}{760$<$SAT$\leq$780}  & mean & \textbf{3.61} & 3.13 & 3.22 & 3.25 & NA & \textbf{3.50} \\
  & count & 14 & 19 & 14 & 10 & 0 & 8 \\
  \hline
  \multirow{ 2}{*}{SAT$>$780}  & mean & 3.39 & \textbf{3.45} & 3.16 & 3.33 & 3.04 & \textbf{3.9} \\
  & count & 9 & 14 & 8 & 12 & 5 & 1 \\
  \hline
\end{tabular}
\end{table}

Table \ref{stat} shows the statistics regarding these 6 subsequences. As we can see, depending on the context information (i.e. math SAT score) of the students, different course sequences yield different GPA performance. The average GPA of these students is 3.26. The evaluation of the proposed personalized policy selection will use the statistics given in Table \ref{stat}.  We compare the proposed algorithm with several benchmarks:
\begin{itemize}
  \item \textbf{Oracle}: the oracle algorithm knows the GPA statistics a priori and hence always recommends the best course sequence to each student.
  \item \textbf{Learning without Personalization}: this algorithm does not know the GPA statistics. However, when recommending course sequences, it ignores the context information of the students.
  \item \textbf{Random}: this algorithm simply recommends course sequence to students randomly.
\end{itemize}

Figure \ref{MAB} shows the average GPA that the students can obtain as the proposed algorithm recommends course sequences to more students.  Initially, the achievable GPA is low since the algorithm does not have sufficient training samples. As more training samples are provided, the performance of the algorithm improves, causing an increase in the simulated GPA for the selected course sequence. Moreover, the algorithm adaptively clusters the students according to their context information and significantly outperforms sequence recommendation that ignores the personalized context information. It is noteworthy that the Random scheme achieves a similar GPA as the actual average GPA in the dataset (i.e. 3.26). This suggests that the current practice of course sequence selection does not recognize the difference in individual students and hence, there is much room to improve the students' learning outcomes by personalizing the course sequences recommended to students.

\begin{figure}
  \centering
  \includegraphics[scale = 0.6]{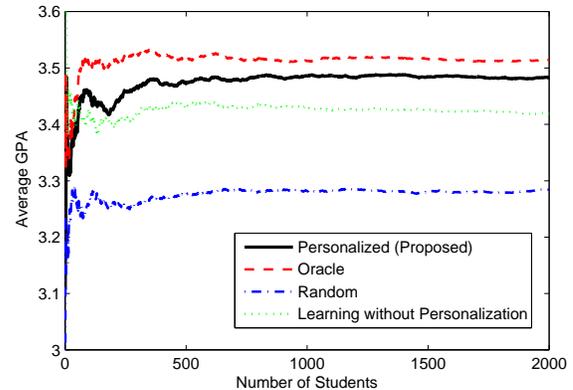}\\
  \caption{GPA performance v.s. the number of students}\label{MAB}
\end{figure}

\section{Conclusion}
In this paper, we studied the problem of personalized course sequence recommendation. The problem is solved in two steps. In the first step, we determine candidate course sequence recommendation policies that result in short time-to-graduation using a Forward-Search Backward-Induction algorithm. In the second step, we develop an online regret minimization learning algorithm to select personalized course sequence recommendation policies among the candidate policies for students aimed at maximizing students' GPA performance. Our analysis and simulation results show that the proposed personalized course sequence recommendation method is able to shorten the students' graduation time and improve students' GPAs. Our framework also has important implications on how the curriculum planner should design the curriculum and allocate teaching resources.

\section*{Acknowledgment}
The authors are indebted to Mr. John S. Toledo, Research Associate, Evaluation and Educational Assessment, UCLA Office of Instructional Development for providing us the data based on which our methods were evaluated.



%

%
%
%
%
%

\bibliographystyle{IEEEtran}
\bibliography{refs}

\end{document}